\newtheorem{obs}{Observation}
\newtheorem{theorem}{Theorem}
\newtheorem{proposition}{Proposition}
\newtheorem{lemma}{Lemma}
\newtheorem{definition}{Definition}
\begin{document}

\title{Minimum Spanning Tree on Spatio-Temporal Networks}

\author{
\hspace*{-11mm}
\begin{tabular}{ccc}
	Viswanath Gunturi & Shashi Shekhar & Arnab Bhattacharya \\
	\url{vgvm@iitk.ac.in} & \url{shekhar@cs.umn.edu} & \url{arnabb@iitk.ac.in} \\
	{\small Dept. of Computer Science and Engineering,} & {\small Dept. of
	Computer Science and Engineering,} & {\small Dept. of Computer Science and
	Engineering,}\\
	{\small Indian Institute of Technology, Kanpur} & {\small University of Minnesota, Twin Cities} & {\small Indian Institute of Technology, Kanpur}\\
	{\small Kanpur, UP 208016, India.} & {\small Minneapolis, MN 55454, USA.} & {\small Kanpur, UP 208016, India.} \\
\end{tabular}
}

\date{}

\maketitle 

\begin{abstract}
	Given a spatio-temporal network (ST network) where edge properties vary with
	time, a \emph{time-sub-interval minimum spanning tree} (TSMST) is a
	collection of minimum spanning trees of the ST network, where each tree is
	associated with a time interval. During this time interval, the total cost
	of tree is least among all the spanning trees. The TSMST problem aims to
	identify a collection of distinct minimum spanning trees and their
	respective time-sub-intervals under the constraint that the edge weight
	functions are piecewise linear. This is an important problem in ST network
	application domains such as wireless sensor networks (e.g., energy efficient
	routing). Computing TSMST is challenging because the ranking of candidate
	spanning trees is non-stationary over a given time interval. Existing
	methods such as dynamic graph algorithms and kinetic data structures assume
	separable edge weight functions. In contrast, we propose novel algorithms to
	find TSMST for large ST networks by accounting for both separable and
	non-separable piecewise linear edge weight functions.  The algorithms are
	based on the ordering of edges in edge-order-intervals and intersection
	points of edge weight functions. 
\end{abstract}

\section{Introduction}
A spatio-temporal network is a network consisting of nodes with location information and edges connecting these nodes.  
The topology and network parameters (such as edge cost) of a spatio-temporal network change with time. 
These kind of networks appear in numerous applications such as energy-efficient routing in a wireless sensor network \cite{wsn-app1,wsn-app2}.

In a typical wireless sensor network, large number of sensor networks are scattered in the observation field. Once deployed, physical 
access to these nodes may prove to be difficult. Thus, energy conservation becomes a important constraint in designing 
algorithms for these kind of sensor networks. In many wireless sensor network applications the nodes are not stationary, i.e., 
they change their physical position with time, for example, sensor network among robots on a reconnaissance mission \cite{msn2}. Usually in such scenarios,
the sensor nodes physically move on a predetermined trajectory to collect data in observation field \cite{msn1}. One of the important problems 
pertaining to these kind of sensor networks is to maintain network connectivity among the individual nodes such that the total cost of 
transmission among the nodes is minimum.  This can be modeled as a spatio-temporal network with nodes representing sensor nodes 
and transmission link among any two sensor nodes represented as an edge. The weight of an edge in the network represents 
energy required to transmit along that particular transmission link. Now the problem is to maintain an energy-efficient communication 
network among the sensor nodes such that there is a path between any two nodes and the total sum of the energy 
required for transmission for all edges involved is minimum. Traditionally minimum spanning trees have been used to solve 
these kind of problems \cite{wsn-app1,wsn-app2} in a static environment; but these no longer 
hold in a non-static environment.

Figure~\ref{wsn-a} shows a wireless sensor network maintained among a group of sensors moving on predetermined trajectories
as depicted in Figure~\ref{wsn-b}. In Figure~\ref{wsn-a}, we assume that there is no direct connectivity between the sensor 
nodes 1 and 4, and that sensor node 5 is connected only to sensor node 3. This network of sensors is represented as a ST 
network in Figure~\ref{mdt-a} where the sensors are represented as nodes and the communication link between any two sensors
is represented as an edge. Time dependent edge weights represent cost of packet transmission between sensors. Since the sensor 
nodes are moving, the distance between any two nodes changes with time.  The energy required to transmit data from one node
to another is directly proportional to the square of distance between them ~\cite{wsn-app1}. Thus, even a small change in 
distance would affect the cost of transmission significantly.  The solution to the time-sub-interval minimum spanning tree (TSMST) 
problem effectively determines the energy-efficient communication paths among these sensor nodes.\\

\begin{figure}[t]
\begin{center}
\subfigure[Sensor Network]{\label{wsn-a} \includegraphics[width=22mm]{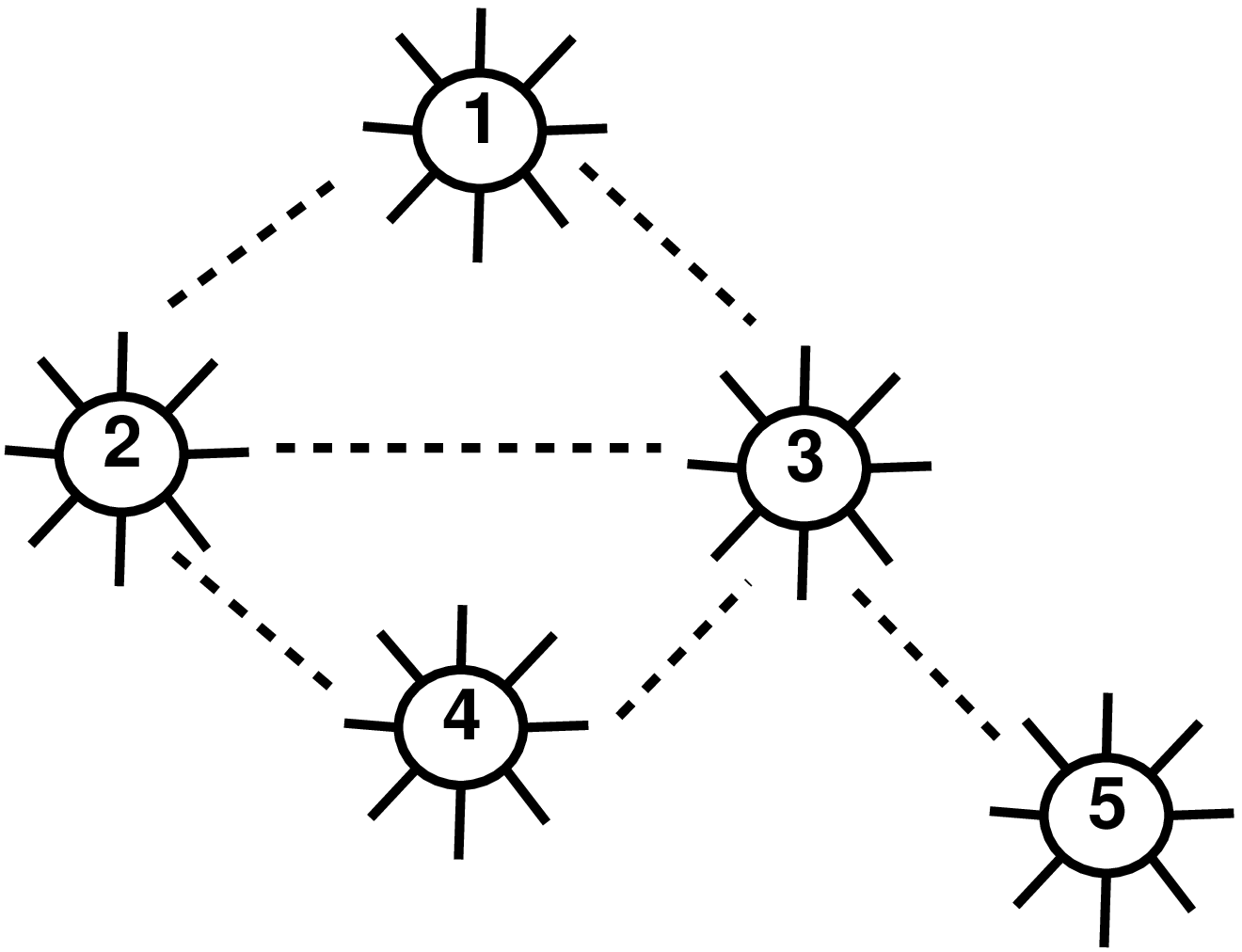}}
\subfigure[Positions of sensor nodes at various times]{\label{wsn-b} \includegraphics[width=50mm]{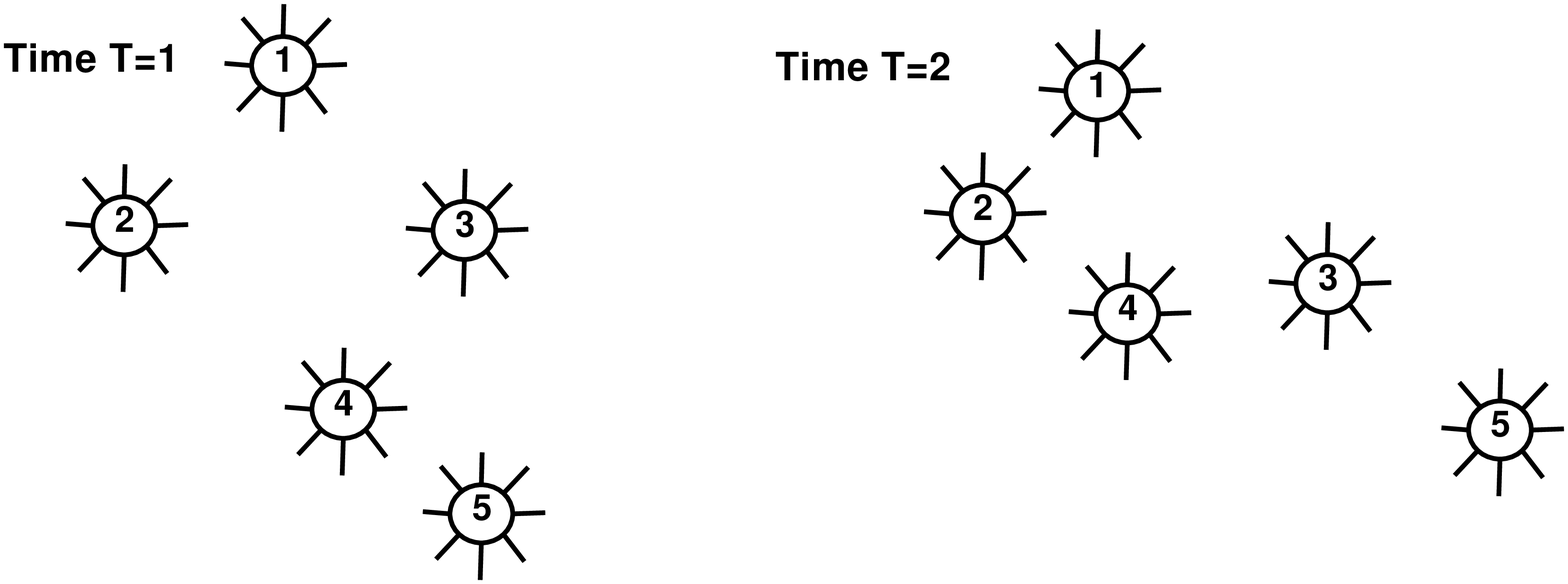}}
\caption{Sample wireless sensor network.}
\end{center}
\end{figure}

\subsection{Motivation}
The limited energy of the sensor nodes requires efficient transmission of information so
that the network lifetime is increased. A lot of work done in this area \cite{wsn-app3,wsn-app1,wsn-app2,wsn-app4,wsn-app5} assume 
that the sensors nodes are stationary.

Computing TSMST is expensive because of the non-stationary ranking of candidate spanning trees in a ST network. This is 
illustrated in Figure~\ref{mdt}. Figure~\ref{mdt-b} shows the minimum spanning trees (MSTs) at different time instants
for the ST network shown in Figure~\ref{mdt-a}. The total costs of these minimum spanning trees are given in Table~\ref{mc}.
Figure~\ref{mdt-b} shows that both spanning tree and total cost of spanning tree change with time.\\

\begin{figure*}[t]
\begin{center}
\subfigure[ST network ]{\label{mdt-a} \includegraphics[width=40mm]{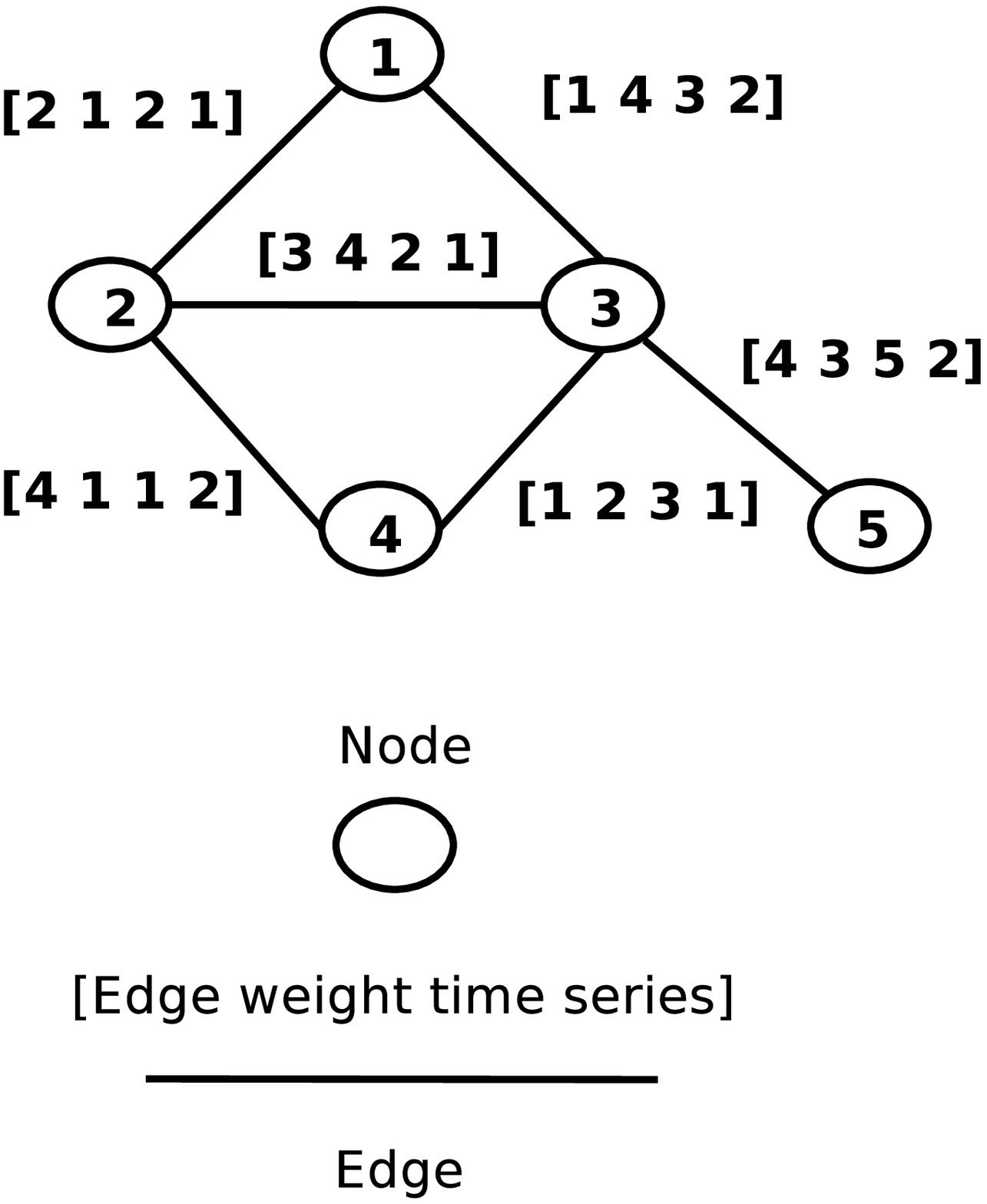}}
\subfigure[MSTs at different time instants]{\label{mdt-b}\includegraphics[width=100mm]{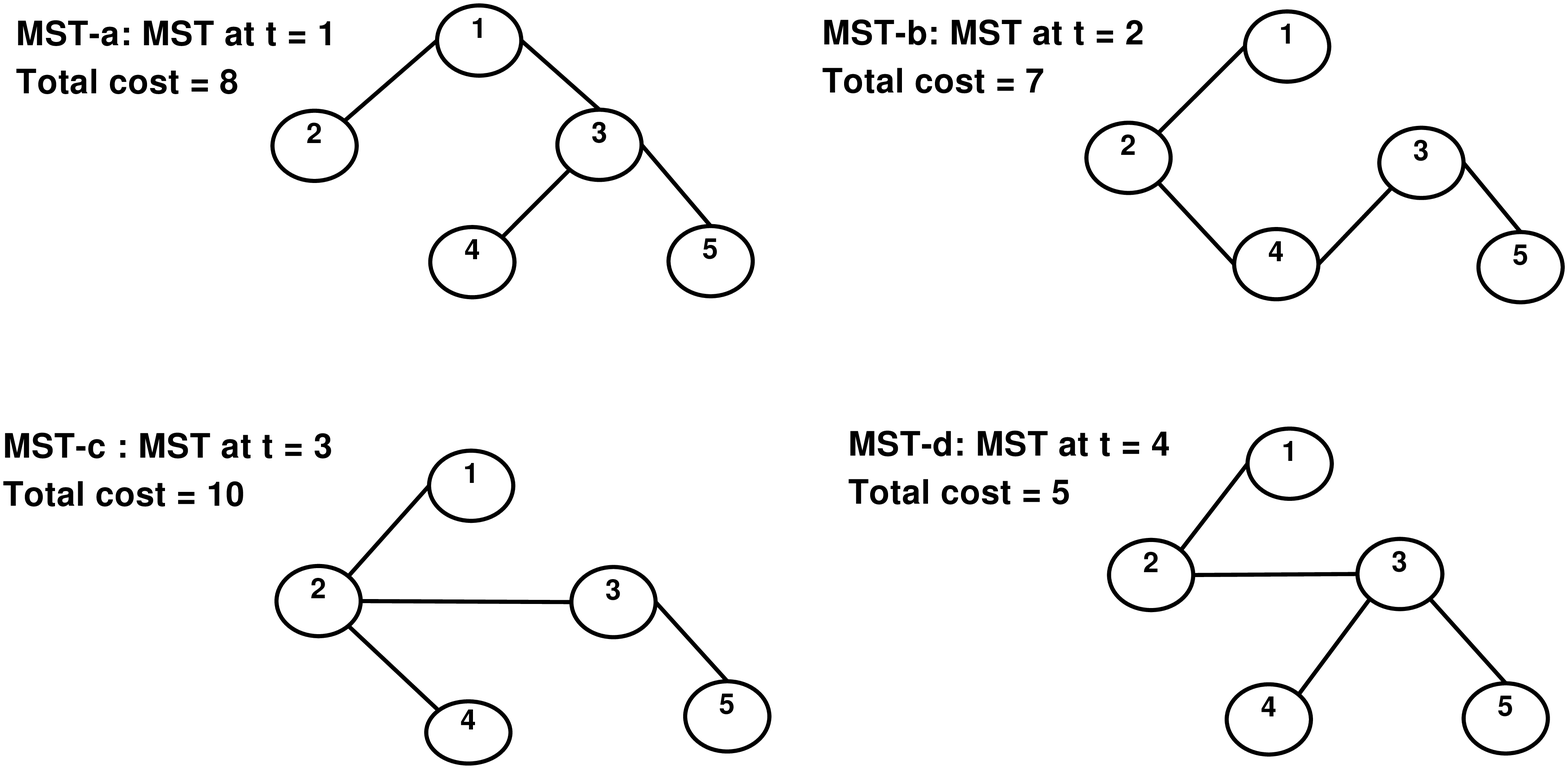}}
\caption{Spatio-temporal network and its corresponding MSTs at various times.}
\label{mdt}
\end{center}
\end{figure*}

\subsection{Contributions}
We present the problem of time-sub-interval minimum spanning tree (TSMST) in a spatio-temporal network. We propose two
algorithms to find the TSMST on a spatio-temporal network and provide analytical evaluations of the proposed algorithms. 
The algorithms allow the ST network to have both separable as well as non-separable edge weight functions.
We also present the experimental analysis of the algorithms proposed.

The rest of the thesis is organized as follows. Chapter \ref{pd} defines the concepts used in the paper, followed by the 
problem definition of TSMST on a spatio-temporal network. Related work is described in Chapter \ref{rw}. 
In Chapter \ref{tca}, two algorithms for solving the TSMST problem are presented. We present correctness and time complexity analysis
of our algorithms in Chapter \ref{eval}. Chapter \ref{exp} presents the experimental design and performance 
analysis. We conclude in Chapter \ref{conclude}.\\

\begin{table}[t]
\begin{center}
\begin{tabular}{|c|c|c|c|c|}
\hline
Time & \multicolumn{4}{c|}{Total cost}\\
\cline{2-5}
 &MST-a & MST-b & MST-c & MST-d \\
\hline
1 & 8 & 11 & 13 & 10 \\
2 & 10 & 7 & 9 & 10 \\
3 & 13 & 11 & 10 & 12 \\
4 & 5 & 6 & 6 & 5 \\
\hline
\end{tabular}
\caption{Total cost of MSTs at various times.}
\label{mc}
\end{center}
\end{table}

\section{Basic Concepts and Problem Definition}
\label{pd}
We model a spatio-temporal network as a \emph{time-aggregated-graph} (TAG)
\cite{TAGSSTD,TAG-BASE}. A time aggregated graph is a graph in which each edge
is associated with a edge weight function. These functions are defined over a
time horizon and are represented as a time series. 
For instance edge (3,5) of the graph shown in Figure \ref{mdt-a} has been assigned a time series [4 3 5 2], i.e., the
weight of the edge (3,5) at time instants t=1, 2, 3, and 4  are 4, 3, 5, and 2 respectively. The edge weight is assumed to vary linearly between any two time instants.
We also assume that no two edge weight functions have same values for two or more consecutive time instants of their time series. If such a case occurs then the values 
of any one of the edges are increased (or decreased) by small quantity $\epsilon$ to make them distinct. For example, in Figure \ref{mdt-a} weight functions of the edges 
(1,2) and (2,3) have same values for time t=3 and t=4. The edge weight functions of graph in Figure \ref{mdt-a} are shown in Figure \ref{rconvex-b}.\\

\begin{figure}[t]
\begin{center}
\includegraphics[angle=270, width= \columnwidth]{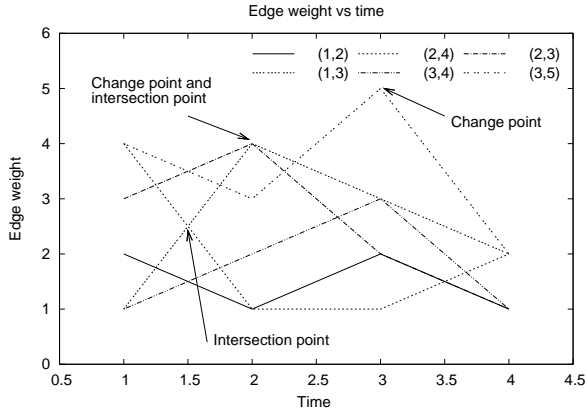}
\caption{Edge weight function plot}
\label{rconvex-b}
\end{center}
\end{figure}

\begin{definition}[Time-sub-interval]
	A time-sub-interval, denoted as $i=(i_s , i_e)$, is a maximal sub interval of time horizon $[1,K]$ which has a unique MST. This unique MST is denoted as $TMST(i)$. 
In other words, the ranking of candidate spanning trees (based on the total cost of tree) is stationary during a time-sub-interval.   
\end{definition}

For example, Figure \ref{rtrace} shows the 4 time-sub-intervals and the MSTs
during those time periods.

\begin{definition}[Edge-order-interval]
	An edge-order-interval, denoted as $\omega=(\omega_{s},\omega_{e})$, is a sub interval of time
	horizon $[1,K]$ during which there is a clear ordering of edge weight
	functions, i.e., none of them intersect with each other. 
\end{definition}

An edge-order-interval is guaranteed to have a unique MST (see Proposition \ref{lem5}).
Two or more consecutive edge-order-intervals may have the same MST. A
time-sub-interval is usually composed of one or more edge-order-intervals. For
example, in Figure \ref{rconvex-b}, the interval (2.66 3.0) is an
edge-order-interval whereas the interval (2.66 3.66] is a time-sub-interval
which is a union of three consecutive edge-order-intervals (2.66 3.0), (3.0 3.5)
and (3.5 3.66).\\

\begin{lemma}
If all the edge weights of a graph are distinct, then there is a unique minimum spanning tree.
\label{lem4} 
\end{lemma}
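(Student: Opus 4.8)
The plan is to prove this classical uniqueness result by contradiction, relying on the so-called \emph{cycle property} (or equivalently an exchange argument) for minimum spanning trees. I would begin by supposing that all edge weights are distinct, yet there exist two \emph{distinct} minimum spanning trees $T_1$ and $T_2$ of the graph. Since they are distinct, there is at least one edge present in one but not the other; let $e$ be the edge of \emph{minimum weight} among all edges lying in exactly one of the two trees, and without loss of generality say $e \in T_1 \setminus T_2$. The distinctness of all edge weights is precisely what makes this minimum-weight differing edge \emph{unique} and unambiguous, which is the crux of the argument.

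Next I would carry out the exchange. Adding $e$ to $T_2$ creates a unique cycle $C$ (since $T_2$ is a spanning tree). Because $e \in T_1$ and $T_1$ is acyclic, this cycle $C$ cannot be entirely contained in $T_1$, so $C$ must contain some edge $f$ that is \emph{not} in $T_1$; in particular $f \in T_2 \setminus T_1$, so $f$ is another edge differing between the two trees. By the choice of $e$ as the lightest differing edge and the fact that all weights are distinct, we have $w(e) < w(f)$ (strict inequality is guaranteed precisely by distinctness). Now I would form $T' = (T_2 \setminus \{f\}) \cup \{e\}$ and verify that $T'$ is again a spanning tree: removing $f$ from the cycle $C$ leaves a connected, acyclic subgraph on the same vertex set with the correct number of edges.

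Finally, I would compare costs. Since $w(e) < w(f)$, the total weight of $T'$ is strictly less than that of $T_2$. But $T_2$ was assumed to be a minimum spanning tree, so no spanning tree can have smaller total weight --- a contradiction. Hence no two distinct minimum spanning trees can exist, and the MST is unique.

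The main obstacle, and the step I would be most careful about, is the cycle-and-exchange mechanics: ensuring that the cycle $C$ created by adding $e$ to $T_2$ genuinely contains an edge $f \in T_2 \setminus T_1$, and then confirming that the swapped graph $T'$ is a legitimate spanning tree. The strictness $w(e) < w(f)$ hinges entirely on the distinct-weights hypothesis (with equal weights one would only get $w(e) \le w(f)$ and the contradiction would collapse), so I would emphasize exactly where distinctness is invoked. Everything else --- counting edges, connectivity after the swap --- is routine once the exchange is set up correctly.
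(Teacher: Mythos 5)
Your proof is correct, but it takes a genuinely different route from the paper's. The paper compares the two hypothetical MSTs through their \emph{sorted} weight sequences $OT_1$ and $OT_2$, locates the first index at which they diverge, adds the lighter divergent edge $e_{i+1}$ to $T_2$, and then runs a three-way case analysis on which edges the resulting cycle can contain before performing the exchange. You instead make a single extremal choice --- the minimum-weight edge $e$ of the symmetric difference $T_1 \,\triangle\, T_2$ --- which guarantees $w(e) < w(f)$ automatically for \emph{every} other edge $f$ of the symmetric difference, so any differing edge found on the cycle $C$ serves as the exchange partner and no case analysis is needed. This buys a shorter and more robust argument: in particular it sidesteps the paper's last case, where the cycle is only known to consist of edges $e_j'$ with $j \geq i+1$ and the stated removal of $e_{i+1}'$ is loose (that edge need not lie on the cycle, so one should really swap out a heaviest cycle edge), whereas your $f$ is by construction on $C$, making the swap $T' = (T_2 \setminus \{f\}) \cup \{e\}$ unambiguously a spanning tree. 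What the paper's prefix-matching buys in return is the incidental fact that two MSTs would have to agree on their common sorted prefix of lightest edges, a sorted-order viewpoint consistent with the edge-ordering machinery (edge-order-intervals) used throughout the rest of the paper. One small quibble with your write-up: the \emph{uniqueness} of your minimal differing edge $e$ is not what the argument needs --- any minimum-weight edge of the symmetric difference would do --- the true role of distinctness is exactly where you place it later, namely forcing the strict inequality $w(e) < w(f)$.
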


\begin{proof}
(By contradiction) Let $T_1$ and $T_2$ be two different minimum spanning trees of a graph. Let $OT_1$ = $e_1 e_2 e_3 e_4 \ldots e_i e_{i+1} \ldots e_{n-1}$ be the increasing order of the 
edge weights of $T_1$ and $OT_2$ = $e_1' e_2' e_3' e_4' \ldots e_i' e_{i+1}' \ldots e_{n-1}'$ be the increasing order of the edge weights of $T_2$. Here $n$ is the number of 
vertices in the graph. Without loss of generality assume that $e_j$ is same as $e_j'$, $\forall j \leq i$. Further assume that $e_{i+1} < e_{i+1}'$. Now consider the cycle generated by 
adding the edge $e_{i+1}$ to $T_2$. Only one such cycle can be created as $T_2$ is a spanning tree of the graph. Now if this cycle contains only the edges $e_j'$ where $j \leq i$, then it 
implies that there is a cycle in $T_1$ as $e_j$ is same as $e_j'$, $\forall j \leq i$ (a contradiction).  Now, if the cycle contains some edges $e_j'$ where $j \leq i$ and $e_{i+1}'$, then 
we can replace the edge $e_{i+1}'$ with $e_{i+1}$ and make a minimum spanning tree of lower cost contradicting the fact that $T_2$ is a MST. Consider the case when the cycle contains the 
edges $e_j'$ where $j \geq i+1$. Let $e_j'$ where $j \neq i+1$ be any edge of the cycle, then we have $e_{i+1} < e_{i+1}' < e_j'$. Again, we can replace edge $e_{i+1}'$ with $e_{i+1}$ and
make a MST of lower cost contradicting the fact that $T_2$ is a MST. Therefore, we can conclude that $T_1$ and $T_2$ are same.    
\end{proof}

\begin{proposition}
An edge-order-interval has a unique MST.
\label{lem5} 
\end{proposition}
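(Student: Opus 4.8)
The plan is to reduce Proposition~\ref{lem5} to Lemma~\ref{lem4}, which already establishes that distinct edge weights force a unique MST. The key observation is that an edge-order-interval is precisely defined (Definition~\ref{lem4}'s neighbor) so that the edge weight functions do not intersect anywhere in its interior, which means their relative ordering is fixed throughout the interval. I would first translate this geometric non-intersection condition into the algebraic statement that the edges are totally ordered by weight at every single time instant inside the edge-order-interval.

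\medskip

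First I would fix an arbitrary time instant $t$ strictly inside the edge-order-interval $\omega = (\omega_s, \omega_e)$. By Definition~\ref{lem5}'s predecessor (the edge-order-interval definition), none of the edge weight functions intersect within $\omega$; combined with the standing assumption that no two edge weight functions coincide at two or more consecutive time instants (so identical values are perturbed by $\epsilon$ to be distinct), every pair of edges has strictly different weight values at $t$. Hence at time $t$ all edge weights are distinct. Applying Lemma~\ref{lem4} to the graph snapshot at time $t$ yields a unique MST, call it $T_t$.

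\medskip

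The second step is to argue that this MST does not change as $t$ varies across $\omega$. Because the functions never cross inside $\omega$, the complete ordering of all edges by weight is identical at every instant in the interval. Since both Kruskal's and Prim's algorithms determine the MST using only the \emph{relative order} of edge weights (never their absolute magnitudes), an invariant edge ordering forces an invariant MST. Therefore $T_t = T_{t'}$ for all $t, t' \in \omega$, and the edge-order-interval has a single, well-defined MST.

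\medskip

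I expect the main obstacle to be the boundary handling rather than the interior argument. At the endpoints $\omega_s$ and $\omega_e$ the weight functions may intersect (indeed, intersection points are exactly what delimit edge-order-intervals), so edge weights need not be distinct there and Lemma~\ref{lem4} does not apply directly. I would sidestep this by proving uniqueness only on the open interior and noting that the MST on a half-open or closed interval is inherited by continuity/limit from the interior, or simply by relying on the perturbation assumption to keep the ordering unambiguous. The only subtlety is confirming that "no intersection in the interior" genuinely implies a globally consistent total order across the whole interval (as opposed to merely local consistency), which follows because a change in relative order of any two edges would require their weight functions to cross somewhere in $\omega$, contradicting the definition.
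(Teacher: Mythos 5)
Your proof is correct and takes essentially the same approach as the paper: both reduce to Lemma~\ref{lem4} via the observation that non-intersection of weight functions inside an edge-order-interval keeps all edge weights distinct with a fixed relative order at every instant, so the same unique MST holds throughout. Your extra steps (making explicit that the MST depends only on the relative order of weights, and the remarks on endpoint handling) simply flesh out what the paper's two-sentence proof leaves implicit.
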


\begin{proof}
Using Lemma \ref{lem4}, if all the edge weights are distinct, then there is a unique MST. In an edge-order-interval there is a clear ordering 
among the edges for all time instants (no two edge weight functions intersect); therefore, there will be an unique MST. 
\end{proof}

\begin{figure}[t]
\begin{center}
\includegraphics[width=\columnwidth]{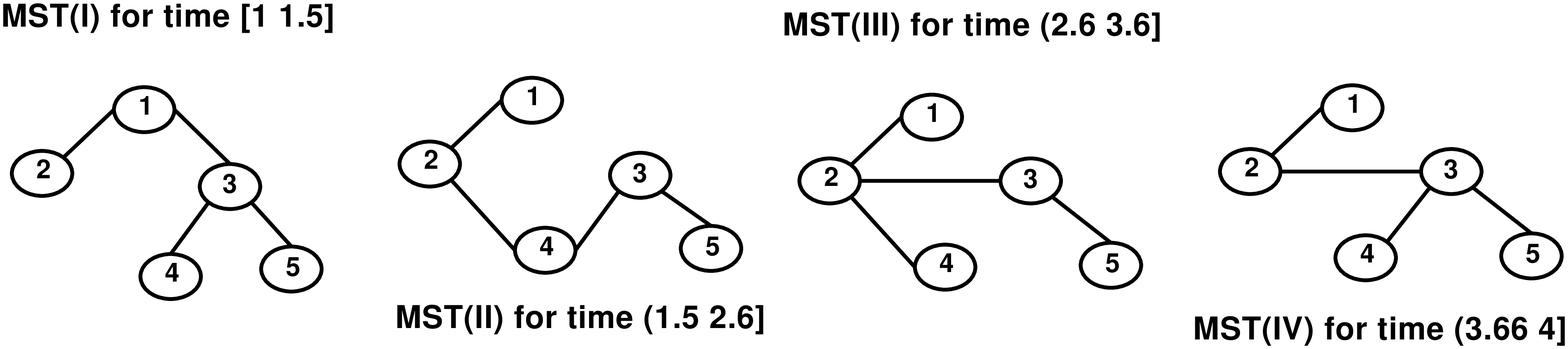}
\caption{Time sub-interval minimum spanning trees.}
\label{rtrace}
\end{center}
\end{figure}

\subsection{Problem Definition}

Given an undirected ST network $G=(V,E)$ where $V$ is the set of vertices of
graph, $E$ is the set of edges, and each edge $e \in E$ has a weight function
associated with it. The weight function is defined over the time horizon
$[1,K]$. The problem of $TSMST$ is to determine the set of distinct minimum
spanning trees, $TMST(i)$, and their respective time-sub-intervals, $i=(i_s
,i_e)$.

The total cost of $TMST(i)$ is least among all other spanning trees over its
respective time-sub-interval $i = [i_s,i_e]$ where $0\leq i_s \leq i_e \leq K$.

We assume that for all edges $e \in E$, the edge weight function is defined for
the entire time interval $[1,K]$. The weight of an edge is assumed to vary linearly between any two time instants
of time series.

In our example of an energy efficient communication network maintained by a
group of sensors, the communication network is represented as a ST network shown
in Figure \ref{mdt-a}. The collection of distinct minimum spanning trees and
their corresponding time-sub-intervals is shown in Figure \ref{rtrace}.

\section{Related Work}
\label{rw}

The related work is classified based on candidate spanning tree ranking. The traditional greedy algorithms 
for determining MST assume stationary ranking. Related work done in the area of non-stationary ranking 
assume separable edge cost. In contrast our work proposes algorithms for non-stationary ranking case by 
accounting for both separable and non-separable edge weights.

\begin{figure}[h]
\begin{center}
\includegraphics[width=60mm]{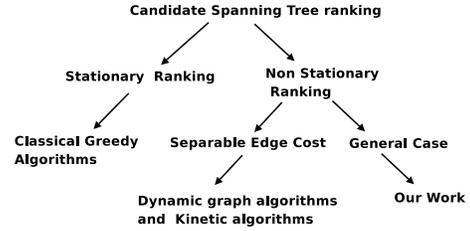}
\caption{Related work classification}
\label{dual}
\end{center}
\end{figure}

\subsection{Classical Greedy Algorithms}
Classical methods for computing minimum spanning trees~\cite{SMST,SMST1,CPROP}
were developed for static networks and assume stationary ranking of candidate
trees (see Figure \ref{dual}), i.e., they assume that mutual ranking (on basis
of total cost) among the spanning trees does not change with time. Therefore,
the classical greedy algorithms such as Kruskal's~\cite{CPROP} and Prim's~\cite{CPROP} cannot be applied
to TSMST problem on spatio-temporal networks.

\subsection{Dynamic Graph Algorithms}
Dynamic graph algorithms and kinetic algorithms incorporate non-stationary
candidate ranking by making use of dynamic data structures such as topology
trees \cite{TPT,ADS} and dynamic trees \cite{DynDS}. However, dynamic data
structures can model only discrete changes such as single edge insertion,
deletion or weight modification \cite{SPAR,TPT,DG1,DG2}. Hence, they cannot
handle piecewise linear edge weight functions.

Kinetic algorithms \cite{KMST,KA2,KA1} combine parametric optimization along
with dynamic data structures. In a parametric optimization
problem~\cite{param1}, each edge $e$ is associated with a linear weight
function. Kinetic algorithms transform the edge weight functions to a dual
plane, i.e., any edge weight function $w_e = a\lambda + b$ where $\lambda$
represents time, is transformed to a point $(-a,b)$ in the dual plane. Thus any
intersection between the edge functions of a tree edge and a non-tree edge in a
$(\lambda,w)$ plane is represented as a common tangent to the convex hulls of
the tree and non tree edges in the dual plane.

\begin{figure}[t]
\begin{center}
\includegraphics[angle=270, width=82mm]{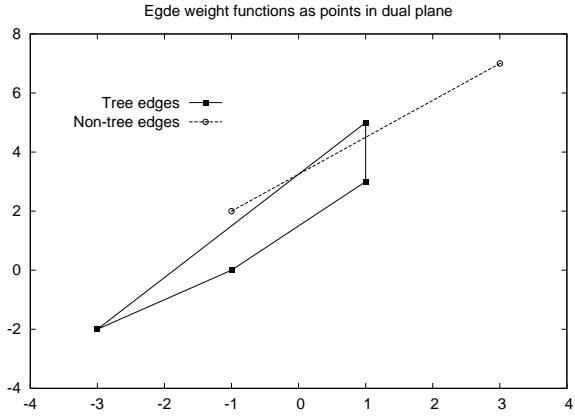}
\caption{Intersecting convex hulls of tree and non-tree edges.}
\label{rconvex-a}
\end{center}
\end{figure}

This method is very efficient when there is a clear separation among the convex
hulls of tree and non-tree edges. Otherwise, the convex hulls may overlap. For
example, Figure~\ref{rconvex-a} shows the corresponding points in the dual plane
for the edge weight functions between time $t=1$ and $t=2$ for the ST network
shown in Figure \ref{mdt-a}. Figure~\ref{rconvex-a} shows the overlapping of the
convex hulls formed by the tree and non-tree edges. However, this kind of
separation can be seen inside a single cycle and thus would require to creating
and maintaining $O(m-n+1)$ convex hulls (one for each fundamental cycle), where
$m$ is the number of edges and $n$ is the number of node.

Moreover, work done in field of non-stationary candidate ranking assume
separable edge weights, i.e, they assume that there is no correlation between
the different weights of an edge at different time instants. Due to this
assumption the kinetic algorithms do not address the situation when change
points and intersection points overlap. Change points are those time instants
where an edge weight function changes its slope and intersection points are
those time instants where two or more edge functions intersect, i.e., they have
same edge weight. For example, in Figure~\ref{rconvex-b} edge (3,5) has  change
points at $t=3$ and $t=2$ and edge weight functions of (1,3) and (2,4) intersect
at $t=1.5$, whereas the weight functions of edges (1,3) and (2,3) both change
and intersect at the same point. At such points the convex hulls may not have a
unique common tangent. This implies that the MST cannot be changed even if it is
required.\\

\section{TSMST Computation Algorithms}
\label{tca}
In this chapter, we present two algorithms for computing the TSMST of a spatio-temporal
network. Consider again the sample network shown in Figure~\ref{mdt-a}
and its edge-weight function plot in Figure~\ref{rconvex-b}. The following
observations can be inferred from edge weight function plot.

\begin{obs}
	\label{ob1}
Consider any two consecutive (with respect to time coordinate) intersection
points of the edge weight functions. These time coordinates form an
edge-order-interval. Within this time interval, all the edge weight functions
have a well defined order.
\end{obs}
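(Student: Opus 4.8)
The plan is to argue directly from the definitions of intersection point and edge-order-interval, using only the continuity of the piecewise-linear edge weight functions. First I would fix notation: let $t_a$ and $t_b$ be two intersection points that are consecutive along the time axis, so that no intersection point lies strictly between them, and consider the open interval $(t_a, t_b)$. Recall that an intersection point is a time instant at which two or more edge weight functions take equal values, and that an edge-order-interval is a subinterval throughout which no two edge weight functions intersect.

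The key claim is that no two edge weight functions are equal at any time in $(t_a, t_b)$. I would establish this by contradiction: suppose two functions $w_e$ and $w_f$ satisfied $w_e(t^*) = w_f(t^*)$ for some $t^* \in (t_a, t_b)$. Then $t^*$ is itself a time instant at which two functions coincide, hence an intersection point lying strictly between $t_a$ and $t_b$ --- contradicting the assumption that $t_a$ and $t_b$ are consecutive. Therefore every pair of edge weight functions is strictly separated throughout the open interval.

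Next I would promote this pairwise separation to a single fixed total order. Fix any two edges $e$ and $f$. Since $w_e$ and $w_f$ are continuous (being piecewise linear) and are never equal on $(t_a, t_b)$, the sign of $w_e - w_f$ cannot change on the interval: if it were positive at one time and negative at another, the intermediate value theorem would force a zero of $w_e - w_f$ in between, i.e.\ an intersection point inside $(t_a, t_b)$, which we have just excluded. Hence exactly one of $w_e < w_f$ or $w_e > w_f$ holds for the whole interval. Applying this to every pair of edges yields a well-defined ordering of all edge weight functions that is valid throughout $(t_a, t_b)$, which is precisely the condition defining an edge-order-interval.

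The argument is largely routine once framed this way; the main point requiring care is the handling of the piecewise-linear structure. Because each pair of line segments meets in at most one point (the standing assumption that no two functions agree on two or more consecutive time instants rules out overlapping segments), the set of intersection points is finite and its elements are isolated, so ``consecutive intersection points'' is well defined and the open interval between them is nonempty. The continuity argument above then applies uniformly across any change points (slope breaks) that happen to fall inside the interval, since continuity --- not differentiability --- is all that the intermediate-value step requires.
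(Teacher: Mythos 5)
Your proof is correct, and in fact the paper offers no proof to compare it against: the statement is labeled an \emph{Observation} and is simply presented as something inferred from the edge weight function plot (Figure 5), so what you have written supplies rigor that the paper leaves implicit. Your argument is sound on all points: (i) by the definition of consecutive intersection points, no $t^{*} \in (t_a, t_b)$ can satisfy $w_e(t^{*}) = w_f(t^{*})$ for any pair of edges; (ii) since each difference $w_e - w_f$ is continuous (piecewise linearity is needed only for continuity in this step) and nonvanishing on $(t_a, t_b)$, the intermediate value theorem forces its sign to be constant there, so every pairwise comparison is fixed throughout, and since the weights are distinct at each instant these pairwise comparisons assemble into a single total order valid on the whole interval --- which is exactly the paper's definition of an edge-order-interval. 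Your closing paragraph supplies the one piece of care the statement genuinely needs: the paper's standing assumption that no two edge weight functions agree at two or more consecutive time instants rules out coincident linear pieces, so each pair of segments meets in at most one point, the set of intersection points is finite and isolated, and ``consecutive intersection points'' is well defined. It is also worth noting that your argument handles correctly the degenerate intersections the paper discusses later, where two functions merely touch at a change point without exchanging order: such touch points are still intersection points, so they terminate edge-order-intervals, and your observation survives unchanged --- the orders on the two adjacent intervals simply happen to coincide, which is consistent with the paper's remark that a time-sub-interval is typically a union of several consecutive edge-order-intervals.
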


\begin{obs}
	\label{ob2}
Using the ordering of edge weights within an edge-order-interval, an MST for
this interval can be built using a standard greedy algorithm such as Kruskal's~\cite{CPROP}
or Prim's~\cite{CPROP}.
\end{obs}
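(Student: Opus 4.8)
The plan is to reduce the claim to two facts: first, that standard greedy MST algorithms decide their output purely through pairwise comparisons of edge weights; and second, that the outcome of every such comparison is invariant over an edge-order-interval. Combining these yields a single tree valid for the whole interval.

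First I would recall the defining property of an edge-order-interval: no two edge weight functions intersect within it. Hence for any pair of edges $e$ and $f$, the sign of $w_e(t) - w_f(t)$ does not change as $t$ ranges over the interval, so the strict total order on edge weights established in Observation~\ref{ob1} is the same at every time instant. The assumption that edge weights are made distinct (via the $\epsilon$-perturbation) guarantees this order is strict, with no ties to resolve.

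Next I would observe that Kruskal's and Prim's algorithms are comparison-based. Kruskal's processes edges in sorted order and admits each edge iff it joins two distinct components, while Prim's repeatedly selects a minimum-weight edge crossing the current cut; in both, every control-flow decision reduces to comparing two edge weights, and the numeric magnitudes never enter except through these comparisons. Therefore, fixing any instant $t_0$ in the edge-order-interval and running the greedy algorithm, the entire execution trace, and hence the output tree $T$, is determined solely by the ordering from the previous step. Since that ordering is identical at every other instant $t$ in the interval, the algorithm would produce exactly the same $T$ at $t$, so $T$ is a single tree serving the whole interval.

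Finally I would invoke the classical correctness of the greedy algorithms (via the cut property) to conclude that, at each fixed instant, $T$ is a minimum spanning tree; by Proposition~\ref{lem5} this MST is the unique one for the interval, so $T$ is indeed the MST throughout. The main obstacle is making precise the comparison-based reduction, i.e., verifying that no step of Kruskal's or Prim's depends on the magnitudes of the weights beyond their relative order, since this is exactly the point that lets a single run at one instant certify the MST for the entire edge-order-interval.
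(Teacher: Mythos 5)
Your proposal is correct and takes essentially the same route as the paper, which states this as an observation justified by the stationarity of the edge ordering together with Lemma~\ref{lem4} and Proposition~\ref{lem5}: the order of edge weights is fixed throughout the interval, greedy MST algorithms depend only on that order, so a single run yields the unique MST for the whole interval. Your explicit comparison-based invariance argument simply formalizes the step the paper leaves implicit.
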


\begin{obs}
	\label{ob3}
There will be a single MST for the entire edge-order-interval.
\end{obs}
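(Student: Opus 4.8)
The plan is to show that the identity of the MST within an edge-order-interval is determined entirely by the relative ordering of the edge weight functions, and that this ordering is invariant across the whole interval by construction. First I would recall the defining property of an edge-order-interval $\omega=(\omega_s,\omega_e)$: no two edge weight functions intersect anywhere inside $\omega$. Consequently, for any two edges $e$ and $f$, whichever of $w_e$ or $w_f$ is larger at one instant $t \in \omega$ remains larger at every other instant of $\omega$; the permutation that sorts the edges by weight is the same function of the edge set at every point of the interval.

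Next I would appeal to the behavior of a greedy MST algorithm such as Kruskal's (Observation \ref{ob2}). Kruskal's algorithm processes edges strictly in nondecreasing order of weight and, for each edge, makes an accept/reject decision that depends only on whether its two endpoints already lie in the same component---never on the actual magnitude of the weight. Hence the tree produced by the algorithm is a function solely of the sorted order of the edges, not of their numeric values at a particular instant.

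Combining these two facts gives the result directly. Since the sorted order of edges is identical at every $t \in \omega$ (first paragraph) and the greedy construction depends only on this sorted order (second paragraph), the algorithm returns the same spanning tree at every time instant in the interval. By Proposition \ref{lem5} this tree is in fact the unique MST at each such instant, so the single tree so produced is the MST throughout $\omega$.

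The main obstacle is making the order-only dependence rigorous: one must verify that no step of the greedy construction ever compares weight magnitudes beyond their relative rank. This is immediate for Kruskal's, since the only numeric use of weights is the initial sort while the cycle test is purely combinatorial. An alternative, self-contained route avoids the algorithm entirely and reuses the exchange argument of Lemma \ref{lem4}: pick any two instants $t_1,t_2 \in \omega$, suppose their MSTs $T_1$ and $T_2$ differ, and take the lowest-ranked edge on which they disagree; because the rank order is shared across $\omega$, the same cut/cycle exchange that derives a contradiction in Lemma \ref{lem4} applies verbatim, forcing $T_1=T_2$.
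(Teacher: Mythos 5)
Your proof is correct and takes essentially the same route as the paper: the paper's Proposition~\ref{lem5} likewise combines the stationary edge ordering inside an edge-order-interval with Lemma~\ref{lem4} (distinct weights give a unique MST). Your explicit argument that the greedy construction depends only on the sorted order of edges---left implicit in the paper via Observation~\ref{ob2}---merely fills in the small step from ``a unique MST at each instant'' to ``one and the same MST across the whole interval,'' and your alternative exchange-argument route is just Lemma~\ref{lem4}'s own machinery reapplied.
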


\subsection{Time Sub-Interval Order (TSO) Algorithm}

The time sub-interval order (TSO) algorithm is designed (Algorithm 1) using
Observation \ref{ob1}, Observation \ref{ob2} and Observation \ref{ob3}. The TSO
algorithm starts by determining all edge-order-intervals. First, it
computes all the intersection points of the edge weight functions. The
intersection points are then sorted with respect to time coordinate. This sorted
list of all intersection points is termed as \emph{edge-order series}. Then the set
$\delta$ of all edge-order-intervals is built from the edge-order series. Each
item in $\delta$ is a pair obtained by picking consecutive elements from the
edge-order series. 

The algorithm then computes MST for the first
edge-order-interval of $\delta$. Next, MST for the next interval in $\delta$ is
computed and compared with the previous MST. If the current MST is same as the
previous MST, then this interval is combined with the time-sub-interval of the
previous MST. Otherwise, the previous MST is output along with its
time-sub-interval. The previous MST forms the TSMT for that particular
time-sub-interval (which was output). This process continues until there are no
more intervals left in $\delta$. Since the algorithm outputs the previous TMST
every time, the last TMST has to be output separately.

\begin{algorithm}[t]
\caption{Time Sub-interval Order Algorithm for finding TSMST}
\label{alg1} 
\begin{algorithmic}[1]
\STATE Determine the intersection points of edge weight functions for all $e \in E$
\STATE Sort the intersection points with respect to time coordinate to form an edge-order series 
\STATE Pick the consecutive elements of the edge-order series to build the set $\delta$
\STATE Compute MST for the fist edge-order-interval of $\delta$ and denote as previous MST
\STATE Initialize the time-sub-interval of the previous MST to the first edge-order-interval  
\FORALL{remaining edge-order-intervals $\omega$ in $\delta$}
\STATE Pick the next interval from $\delta$
\STATE Build Minimum Spanning Tree using any greedy algorithm like Prim's or Kruskal's
\IF{current MST differs from previous MST}  
\STATE Output the previous MST (which is now TMST) along with the time-sub-interval and make current MST as the previous MST
\STATE Set the time-sub-interval of previous MST to $\omega$ 
\ELSE 
\STATE Combine $\omega$ with the time-sub-interval of previous MST
\ENDIF
\ENDFOR
\STATE Output the previous MST (TMST) along with its time-sub-interval
\end{algorithmic}
\end{algorithm}

\begin{figure}[t]
\begin{center}
\includegraphics[width=88mm]{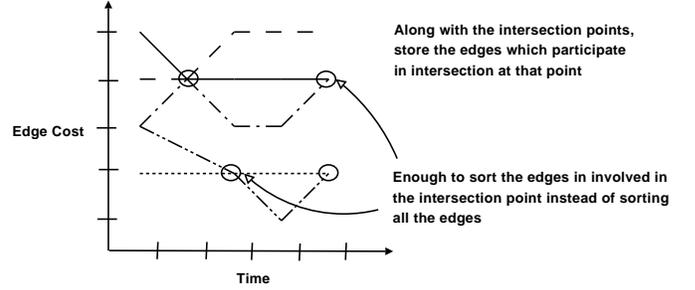} 
\caption{Sorting only the edges involved in intersection.}
\label{lsort}
\end{center}
\end{figure}

\begin{obs}
	\label{ob4} 
The time-sub-interval order (TSO) algorithm  can be improved by sorting only the edges involved in an intersection instead of sorting all the edges in each edge-order-interval (see Figure \ref{lsort}). This is true because only the edges involved in intersections can change their relative ordering.
\end{obs}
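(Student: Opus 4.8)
The plan is to show that, when the TSO algorithm advances from one edge-order-interval to the immediately following one, the relative order of any edge \emph{not} participating in the separating intersection point is left unchanged; consequently the sorted list need only be repaired by repositioning the edges that do participate. Formally, let $\omega_1=(\omega_{1s},\omega_{1e})$ and $\omega_2=(\omega_{2s},\omega_{2e})$ be two consecutive edge-order-intervals taken from the edge-order series, so that $\omega_{1e}=\omega_{2s}=t^{*}$ is a single time coordinate of that series. Let $S\subseteq E$ be the set of edges whose weight functions intersect at $t^{*}$, i.e.\ the edges ``involved in the intersection.'' By Observation~\ref{ob1} the edge weights admit a fixed total order throughout each of $\omega_1$ and $\omega_2$, so I would reduce the claim to the following pairwise statement: for every pair of edges $e_i,e_j$ whose relative order in $\omega_1$ differs from their relative order in $\omega_2$, both $e_i\in S$ and $e_j\in S$.

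The core step is a continuity argument. Each edge weight function is piecewise linear and hence continuous on $[1,K]$, so the difference $g(t)=w_{e_i}(t)-w_{e_j}(t)$ is continuous. If the order of $e_i$ and $e_j$ flips between $\omega_1$ and $\omega_2$, then $g$ is strictly positive throughout the interior of $\omega_1$ and strictly negative throughout the interior of $\omega_2$ (or vice versa), the strictness coming from the fact that distinct edge weights never coincide inside an edge-order-interval. By the intermediate value theorem $g$ must vanish at some point strictly inside $[\omega_{1s},\omega_{2e}]$. However, by the definition of an edge-order-interval no two weight functions intersect in the open interiors of $\omega_1$ or $\omega_2$, so $g$ cannot vanish there. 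The only remaining candidate is the shared boundary $t^{*}$, whence $w_{e_i}(t^{*})=w_{e_j}(t^{*})$. This places both $e_i$ and $e_j$ in $S$, establishing the pairwise claim, and by contraposition: if either edge lies outside $S$, its order relative to every other edge is preserved across $t^{*}$.

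Taking the contrapositive over all pairs yields the observation: the only edges that can change their relative ordering between consecutive edge-order-intervals are those in $S$, so rebuilding the full sorted list in $\omega_2$ is wasteful; it suffices to delete the edges of $S$ from the sorted structure maintained for $\omega_1$ and reinsert them at their updated positions, as depicted in Figure~\ref{lsort}. I expect the main obstacle to be the careful treatment of degenerate coincidences --- in particular when several edges cross simultaneously at $t^{*}$ (so that $|S|>2$) and when a change point coincides with an intersection point, as flagged earlier for edges $(1,3)$ and $(2,3)$. In these cases one must verify that $S$ still captures \emph{all} pairs whose order can flip; this follows because any flipping pair contributes a zero of its difference function at $t^{*}$ and is therefore drawn into $S$ by the same continuity argument, independently of how many functions meet at $t^{*}$ or whether their slopes change there.
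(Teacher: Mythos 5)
Your proposal is correct and follows essentially the same reasoning as the paper, which states Observation~\ref{ob4} without a formal proof, justifying it only by the remark that edges not involved in an intersection cannot change their relative ordering. Your intermediate-value-theorem argument simply makes that one-sentence justification rigorous --- pinning the order flip of any pair to a common zero of the difference function, which by the definition of an edge-order-interval can only occur at the shared boundary $t^{*}$ --- and correctly handles the degenerate cases (simultaneous crossings, change points coinciding with intersections, and touchings that do not flip order) that the paper glosses over.
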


The TSO algorithm computes a MST for each intersection point. This would incur
an unnecessary overhead if the MST does not change at a intersection point
(e.g., see Figure \ref{nworse}).

Moreover, if the edges involved in intersection are from different bi-connected
components, the MST will not change (Proposition~\ref{cor1}). A bi-connected component of a
graph is a maximal set of edges such that any two edges in the set lie on a common cycle~\cite{SMST1}. For example, 
edges $(3,5)$ and $(1,2)$ of the network shown in Figure~\ref{mdt-a} belong to different bi-connected components, therefore intersection of their weight functions 
can never produce a change in MST. On the other hand, intersection of weight functions of edges in the same bi-connected component
(e.g., edges $(1,2)$, $(2,4)$, $(3,4)$, $(1,3)$, $(2,3)$ in Figure \ref{mdt-a}) may change the MST.

Similarly, intersection of edge weight functions of two or more tree edges or non-tree edges do not change the MST (Proposition \ref{lem1} and Proposition \ref{lem2}).
For example, in Figure~\ref{rconvex-b} intersection of weight functions of edges $(1,3)$ and $(2,3)$ at time $t=2$ does not change the MST because both are non-tree edges. 
Likewise, intersection of weight functions of edges $(1,2)$ and $(2,4)$ at time $t=2$ (see Figure \ref{rconvex-b}) does not change the MST because both are tree-edges.

Furthermore, if only one tree edge and one non tree edge (belonging to only one common cycle) are
involved in intersection, we can exchange those edges in tree. For example see
Figure \ref{c1}. Here, the weight function of edges $(1,2)$ (a non-tree edge)
and $(5,6)$ (a tree edge) intersect at time $t=1.5$. Thus instead of building a
new MST at $t=1.5$ (as done by TSO), exchanging the edges involved in
intersection  in current MST would give the new MST (see Figure \ref{c1}). These
ideas are presented formally in the following propositions. These are used in
designing an incremental algorithm for computing the TSMST. 

\begin{figure}[t]
\begin{center}
\includegraphics[width=86mm]{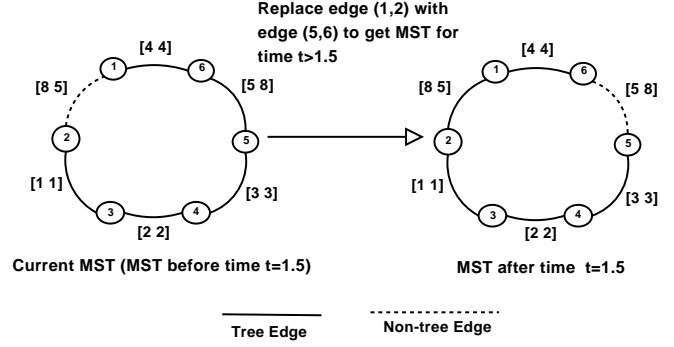}
\caption{Edge exchange inside a cycle.}
\label{c1}
\end{center}
\end{figure}

\begin{proposition}
	\label{lem1}
	The intersection of the edge weight function of two non-tree edges at any time instant will not affect the minimum 
spanning tree, i.e., the tree will not change.\label{lem1}
\end{proposition}
\begin{proof} 
Consider the intersection of the edge weight function of two non-tree edges $e_x$ and $e_y$ at a time instant
$t=\alpha$. Let $O_1$ = $e_1 e_2 \ldots e_i e_x e_y e_{i+3} \ldots e_m$ be the ascending order of weights of the edges 
before time $t=\alpha$ and $O_2$ = $e_1 e_2 \ldots e_i e_y e_x e_{i+3} \ldots e_m$ after time $t=\alpha$. Here, $m$ is number of edges in the graph.
Since all the edges have distinct weights, therefore, MST of the graph induced by the edges $e_j$, where $j \leq i$, will be same 
for both the series $O_1$ and $O_2$ (using Lemma \ref{lem4}). Now due to the intersection, edges $e_x$ and $e_y$ have changed their relative ordering.
Previously, i.e., before time $t=\alpha$, edge $e_x$ was not present in the MST. But now its weight has further increased so it will again not be in the MST. Now, the weight 
of the edge  $e_y$ has decreased but it is still more than the next lighter edge (since weight functions of only $e_x$ and $e_y$ have intersected). Therefore, $e_y$ will
again not be in the MST. Thus the intersection of the edge weight functions of two non-tree edges does not affect the MST.      
\end{proof}

\begin{proposition}
	\label{lem2}
	The intersection of the edge weight function of two tree edges at any time instant will not affect the minimum 
spanning tree, i.e., the MST will be the same. \label{lem2}
\end{proposition}
\begin{proof}
Consider the intersection of the edge weight function of two tree edges $e_x$ and $e_y$ at a time instant
$t=\alpha$. Let $O_1$ = $e_1 e_2 \ldots e_i e_x e_y e_{i+3} \ldots e_m$ be the ascending order of weights of the edges 
before time $t=\alpha$ and $O_2$ = $e_1 e_2 \ldots e_i e_y e_x e_{i+3} \ldots e_m$ after time $t=\alpha$. Here, $m$ is number of edges in the graph.
Since all the edges have distinct weights, therefore, MST of the graph induced by the edges $e_j$, where $j \leq i$, will be same 
for both the series $O_1$ and $O_2$ (using Lemma \ref{lem4}). Now due to the intersection edges $e_x$ and $e_y$ have changed their relative ordering.
Previously i.e., before time $t=\alpha$, edge $e_y$ was present in the MST. But now its weight has decreased so it will again be in the MST. Now, the weight 
of the edge  $e_x$ has increased but it is still less than the next heavier edge (since weight functions of only $e_x$ and $e_y$ have intersected). Therefore, $e_x$ will
again be in the MST. Thus the intersection of the edge weight functions of two tree edges does not affect the MST. 
\end{proof}
   
\begin{proposition}
	\label{cor1}
	The intersection of the edge weight functions of two edges belonging to different bi-connected components can never
 change the MST.\label{cor1}
\end{proposition}
\begin{proof}
Consider a graph $G$ with two bi-connected components $G_1$ and $G_2$ such that $G$ = $G_1 \cup G_2$. The bi-connected components $G_1$ and $G_2$ can at most share one 
vertex (by the maximality of bi-connected components \cite{SMST1}). Consider a spanning tree of $G$, $T_G$ = $T_{G_1} \cup T_{G_2}$. Now in order to determine the minimum spanning tree of 
$G$ we have to minimize the total cost of $T_G$. The minimum value is attained when both $T_{G_1}$ and $T_{G_2}$ have their minimum values. But the minimum value of the total cost
of $T_{G_1}$ ($T_{G_2}$) is attained when $T_{G_1}$ ($T_{G_2}$) is the MST of $G_1$ ($G_2$). This implies that MST of $G$ is the union of the MSTs of its 
individual bi-connected components. 

Consider the intersection of the weight function of an edge $e_x$ belonging to $G_1$ with an edge $e_y'$ belonging to $G_2$. Let $O_1$ = $e_1 e_2 \ldots e_i 
e_x \ldots e_{m_1}$ be the ascending order of the edge weights of edges in $G_1$ and $O_2$ = $e_1' e_2' \ldots e_i' e_j' \ldots e_{m_2}'$ be the ascending order of the edge weights of
edges in $G_2$. Here $m_1$ is the number of edges in the bi-connected component $G_1$ and $m_2$ is the number of edges in the bi-connected component $G_2$. Now due to this intersection 
there is no change in $O_1$ or $O_2$. Therefore the MST of $G_1$ (or $G_2$) will not change. This implies that the MST of $G$ will not change.  
\end{proof} 

\begin{proposition}
Intersection of weight functions of two or more tree-edges (or non-tree edges) from one bi-connected component with two or more non-tree edges (or tree edges) from 
another bi-connected component cannot change the MST. \label{cor2}  
\end{proposition}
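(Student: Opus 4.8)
The plan is to reduce this proposition to the three results already established: the bi-connected decomposition of the MST obtained in the proof of Proposition~\ref{cor1}, together with the single-type intersection results of Propositions~\ref{lem1} and \ref{lem2}. The central structural fact I would reuse is that the MST of $G$ is the disjoint union of the MSTs of its bi-connected components, $MST(G)=\bigcup_k MST(G_k)$, and that each $MST(G_k)$ is determined solely by the relative ordering of the weights of the edges lying inside $G_k$ (by Lemma~\ref{lem4}). The whole argument then becomes: show that every component MST is individually unaffected, and conclude that their union is unaffected.

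First I would observe that distinct bi-connected components share no edges, so the edge sets of $G_1$ and $G_2$ are disjoint. Consequently, any crossing that occurs at time $t=\alpha$ between an edge of $G_1$ and an edge of $G_2$ leaves the internal ordering of the weights within $G_1$ (among themselves) and within $G_2$ (among themselves) completely unchanged, exactly as argued in Proposition~\ref{cor1}. Thus the cross-component crossings at this common time instant can be discarded: they cannot influence $MST(G_1)$ or $MST(G_2)$, and hence cannot influence $MST(G)$.

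Next I would handle the crossings that occur within a single component. By hypothesis, all edges of $G_1$ participating in the intersection are of a single type, either all tree edges or all non-tree edges, and likewise for $G_2$. The key point is that, since only same-type edges of $G_1$ cross one another at $\alpha$, no tree edge of $G_1$ ever overtakes a non-tree edge of $G_1$, nor the reverse; the tree/non-tree weight separation inside $G_1$ is preserved. This is precisely the mechanism underlying Propositions~\ref{lem1} and \ref{lem2}: a non-tree edge that was the heaviest edge of its fundamental cycle remains so, and a tree edge that was not the heaviest remains not the heaviest. Applying this to each fundamental cycle of $G_1$ shows that $MST(G_1)$ is unchanged, and the identical argument applies to $G_2$.

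The main obstacle is that Propositions~\ref{lem1} and \ref{lem2} are stated for exactly two edges, whereas here two or more same-type edges may meet simultaneously. I would close this gap by phrasing the invariant through the cut/cycle characterization rather than through a single pairwise swap of consecutive elements of the weight ordering: because the set of crossing edges inside $G_1$ is of one type, the mutual order between the tree edges and the non-tree edges of $G_1$ is untouched no matter how many same-type functions meet at the point, so the classification of each edge as being in or out of the MST is preserved. Combining the three steps, every $MST(G_k)$ is unchanged, and therefore $MST(G)=\bigcup_k MST(G_k)$ is unchanged, which completes the argument.
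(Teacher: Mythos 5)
Your proposal is correct and follows essentially the same route as the paper's own proof: discard cross-component crossings via the union-of-component-MSTs fact from Proposition~\ref{cor1}, then dispose of the within-component crossings using the same-type arguments of Propositions~\ref{lem1} and~\ref{lem2}, and conclude through the decomposition $MST(G)=\bigcup_k MST(G_k)$. Your one genuine addition is welcome: the paper cites Propositions~\ref{lem1} and~\ref{lem2} (which, as written, treat exactly two edges adjacent in the weight ordering) without comment, whereas you explicitly notice the ``two or more'' gap and close it via the cycle-property invariant --- each non-tree edge remains heaviest in its fundamental cycle because tree/non-tree relative orders are untouched --- making your version slightly more rigorous than the original.
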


\begin{proof}
Consider the intersection of weight functions of two tree edges $e_1$ and $e_2$ of bi-connected component $B$ with two non-tree edges $e_1'$ and $e_2'$ of bi-connected component $B'$.
Using Proposition \ref{cor1}, we conclude that intersection of weight functions of edges belonging to different bi-connected components do not change the MST. Now, inside each of the 
bi-connected components $B$ and $B'$ either only tree edges are involved or only non-tree edges are involved. Using Proposition \ref{lem1} and Proposition \ref{lem2} we conclude that  
these intersections cannot change the MSTs within the bi-connected components. Now, since MSTs of individual bi-components did not change therefore the MST of the entire 
graph will also not change. This is because the MST of graph is the union of the MSTs of its individual bi-connected components (Proposition \ref{cor1}).
\end{proof}

\begin{figure}[t]
\begin{center}
\includegraphics[width=79mm]{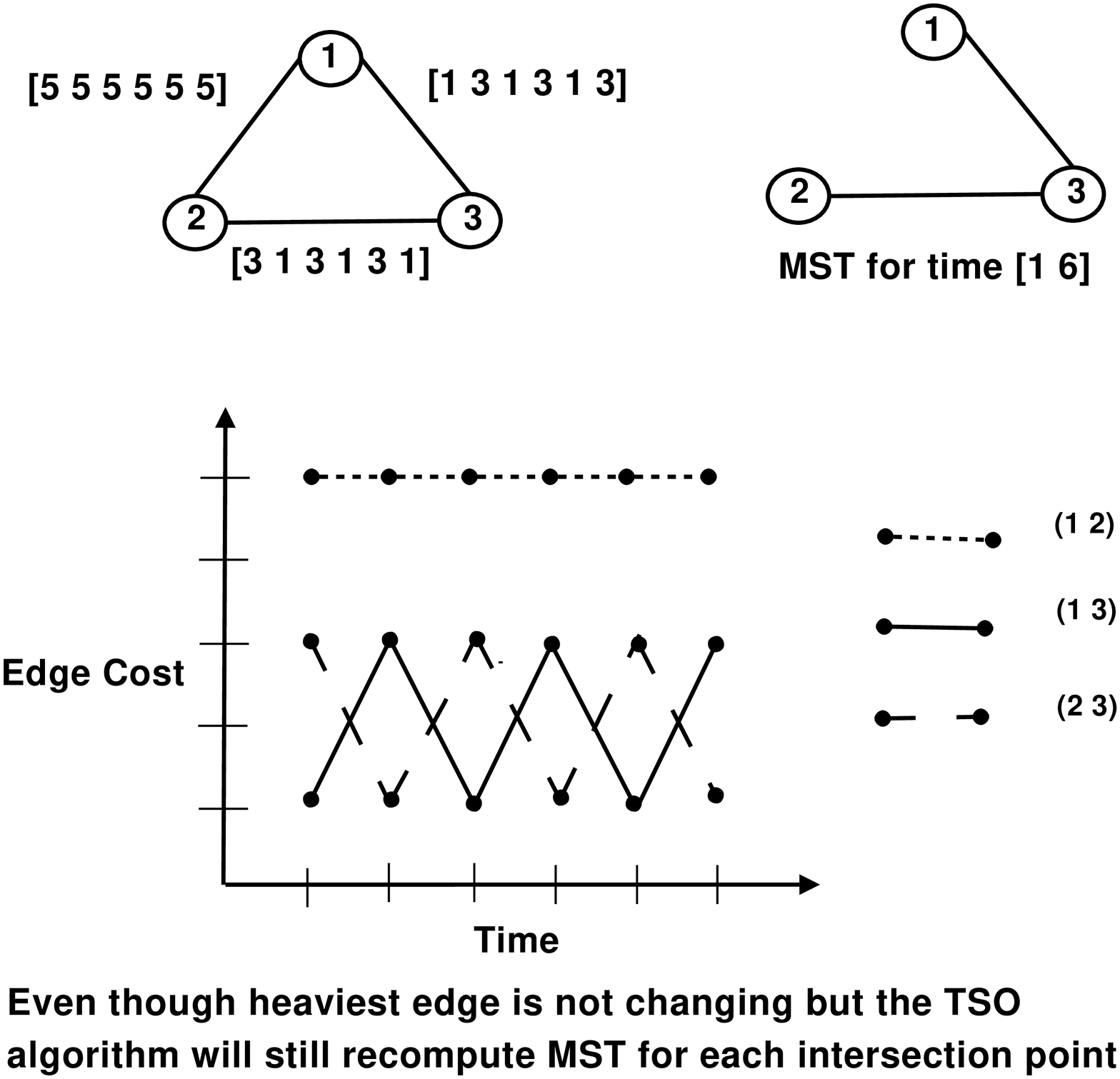} 
\caption{Worst case behavior of TSO algorithm.}
\label{nworse}
\end{center}
\end{figure}

\subsection{Edge Intersection Order (EIO) Algorithm}

Here we present an incremental algorithm for computing TSMST. The edge
intersection order (EIO) algorithm starts by computing the MST of the network at
time $t=1$ and then continues to update the tree, only if necessary, at each
intersection point. The intersection points are processed in increasing order of
their time coordinates. Through preprocessing, some additional information about
the edges is stored while computing the MST at time $t=1$. This information is
used to prune the intersection points which are guaranteed not to cause any
change in the MST.

The modified reverse-delete (Algorithm~\ref{alg2}) is used to compute the MST at
time $t=1$. Figure \ref{cc} shows the execution trace of the algorithm. The
algorithm first computes the depth first search (DFS) tree \cite{SMST1} of the given ST
network. A non-tree edge $ne=(f_s,f_e)$, where $f_e$ is the ancestor of $f_s$,
is chosen. Now, edge $ne$ and edges seen while following the parent pointers
from node $f_s$ to $f_e$ and $ne$ form a cycle. This cycle is termed as \emph{fcycle}.
The heaviest edge of this fcycle is deleted. For example, in Figure \ref{cc},
edge (2,4) is a non-tree edge where node 2 is an ancestor of node 4 (considering
the DFS tree to be rooted at node 2). Now on following the parent pointers from
node 4 to node 2, edges (4,3), (3,2) and (2,4) form a fcycle with edge (2,4)
being the heaviest. This edge is deleted and an entry in a table called fcycle
table is made for FC1 with edges (4,3), (3,2) and (2,4). A DFS tree of the
remaining edges is computed. A non-tree edge is picked up and the heaviest edge
in its fcycle is deleted and a corresponding entry in the fcycle table is made.
This process continues until only $n-1$ edges are left in the network. At this
point there will no non-tree edges after constructing the DFS tree, i.e., all
edges are tree edges. These edges form the MST of the network. They are marked
as tree edges and all other remaining edges of the original network are marked
as non-tree edges.\\ 


\noindent\textbf{Data structures used by EIO algorithm}
The following data structures are used by the EIO
algorithm and the modified reverse-delete algorithm. This information is
pre-computed before the EIO algorithm starts processing each intersection point:
\begin{itemize}
	\item \textbf{Edge-Table:} A look up table storing the details of each edge.
		This table contains, for each edge, a unique edge-id, the two nodes it
		connects, the bi-connected component it belongs to, and a bit vector
		storing the fcycles of which the edge is a member.   
	\item \textbf{Fcycle-Table:} A look up table storing the details of each
		fcycle observed while constructing the minimum spanning tree at time
		$t=1$. Each entry of this table contains a unique fcycle-id and a list
		of its member edges (edges column).
\end{itemize}

The MST of the network is stored as a bit vector of length equal to the number
of edges in the network. The bit vector would contain a bit corresponding to
each edge of the network and the edges belonging to MST would be set to one. The
edge-table and fcycle-table are indexed using hashing. The information stored in
fcycle-table can also be computed at runtime by adding the non-tree edge to the MST
and determining the fcycle. This would incur an additional $O(n)$ cost each
time. This is avoided by storing the information while constructing MST for time
$t=1$. The fcycles column of the edge-table is filled by traversing through the
fcycle-table. Each fcycle-id from the fcycle-table is chosen and the bit
corresponding to this fcycle-id is set for all the edges in the edges column of
this entry of fcycle-table. The bi-connected component column of the edge-Table
is filled  in linear time using the algorithm given in \cite{AHO}.

\begin{algorithm}[t]
\caption{Modified Reverse-Delete Algorithm for MST}
\label{alg2}
\begin{algorithmic}[1]
\STATE Find depth first search tree of the given network
\WHILE{non-tree edge is present} 
\STATE Pick a non-tree edge and determine all the edges in the fcycle
\STATE Delete the edge which has maximum weight. Record the fcycle in $fcycle-table$
\STATE Find the DFS tree of the remaining network
\ENDWHILE
\STATE The remaining edges form the minimum spanning tree
\end{algorithmic}
\end{algorithm}

While considering an intersection point, two levels of filters are applied to
prune the intersection points that cannot cause any change in MST. If all the
edges involved in an intersection are either non-tree edges or tree edges, then
it can be pruned using Proposition \ref{lem1} and Proposition \ref{lem2} respectively.
Similarly, if all the edges involved in an intersection belong to different
bi-connected components, then it can be pruned using Proposition \ref{cor1}.

After applying these filters, the edges are grouped by their bi-connected
component number. Now, within each group we again check if the edges are only
tree edges or non-tree edges. This is important in cases when the weight
functions of two or more tree (or non-tree) edges of one component and one or
more non-tree (or tree) edges of another component intersect at a point. These
kinds of intersection points cannot cause any change in a tree (Proposition \ref{cor2}). 
Hence, the filters are applied again after the edges are grouped by their bi-component
number. After applying these filters if the intersection point is not pruned, 
we check if the relative order of edges weights before and after the
intersection point are same. The intersection point can be pruned safely if
the relative order is same. This can happen when weight functions of two or more
edges touch each other at change points (see Figure~\ref{rconvex-b}) without
changing their relative order.

If an intersection point is not pruned after applying all the filters, then
a new MST is made by making changes to the previous MST. If only two edges (per
bi-connected component) are involved in the intersection and they are part of
only one common cycle, i.e., they are not part of any cycle except the one which
is common, then we can directly exchange the edges in the tree, i.e., make the
heavier edge between them as non-tree edge and the lighter edge as tree edge. In all
other cases we add each of the non-tree edges involved in the intersection to
the MST and delete the heaviest edge from their respective fundamental cycles.
We can check whether two edges are part of only the common cycle by performing
an AND operation on the fcycles column of the two edges and check the number of
bits set to 1 in the result. Note that in the two-edges intersection case
discussed above, adding the non-tree edge to the tree and deleting the heaviest
edge from its fundamental cycle would still give the correct MST. The
information gathered during initial construction was used to save this
unnecessary re computation. The start time of the time-sub-interval of the new
MST and end time of the time-sub-interval of previous MST are set to the time
coordinate of intersection point at which the MST changed. The previous MST is
output along with its time-sub-interval. The previous MST forms the TMST for
that particular time-sub-interval.

We now describe the terminology used in the EIO algorithm.   	     
While considering an intersection point at time $t$, $S_t$ is the set of all
intersection points whose time coordinate is $t$. An intersection point in $S_t$
is denoted as  $p_r$, where $r \in [1,2, \ldots |S_t|]$. The set of all $S_t$'s
form $\Gamma$. $B_{p_r} = [b_1, b_2, \ldots]$ is the set of groups obtained
after grouping the edges involved in the intersection by their bi-connected
components. $s_i$ and $s_j$ are the decreasing order of edges weights (of a
particular bi-connected component group) before and after the time of the
intersection point. The edge intersection order algorithm is formally presented
as Algorithm~\ref{alg3}.\\

\begin{figure}[t]
\begin{center}
\includegraphics[width=94mm]{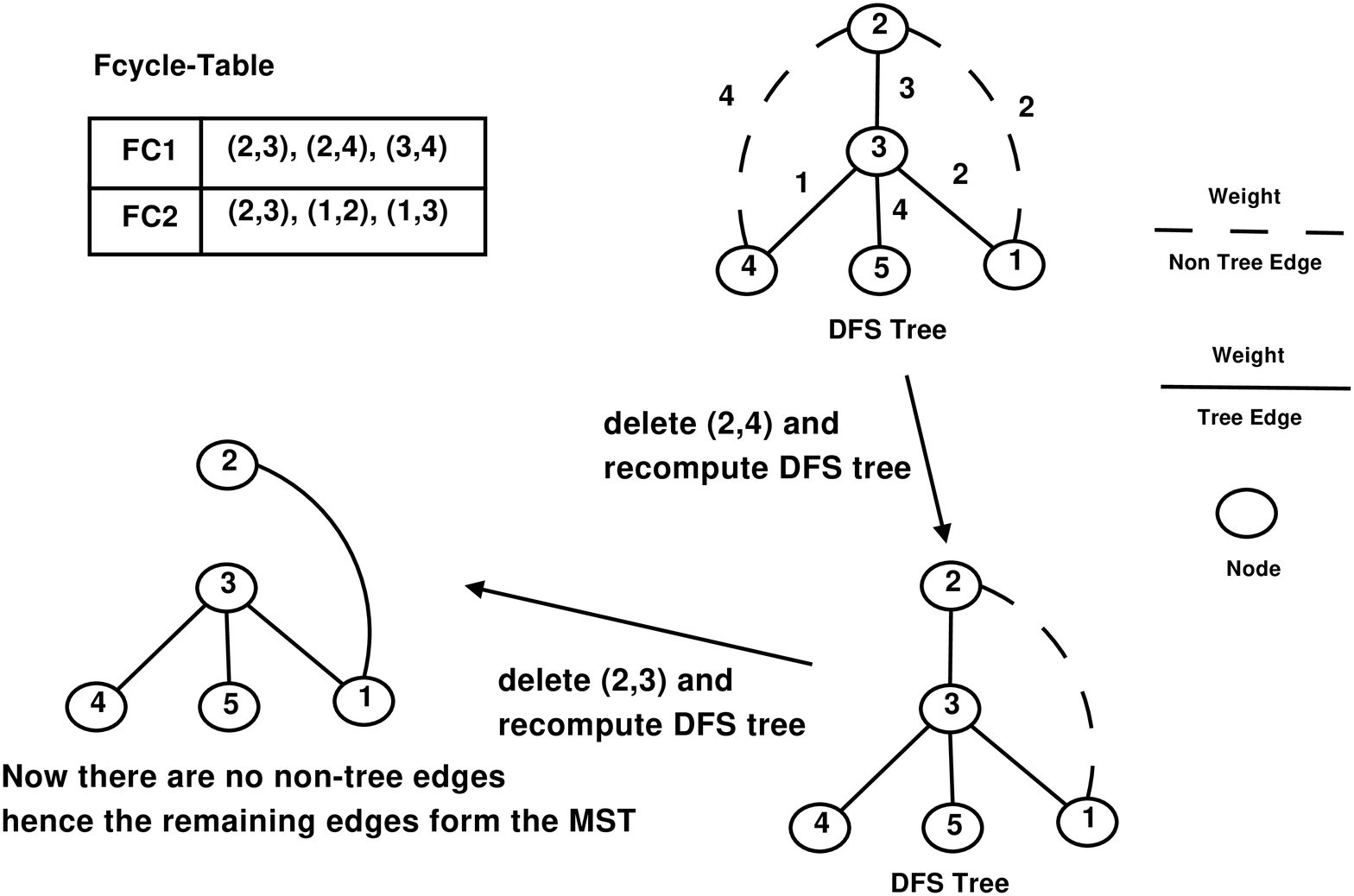}
\caption{Trace of modified reverse delete algorithm for building MST.}
\label{cc}
\end{center}
\end{figure}

\noindent\textbf{Execution Trace of EIO Algorithm} We next present an execution trace of
the EIO algorithm for the example in Figure~\ref{mdt-a}. The algorithm starts by
computing the bi-connected components of the given graph and filling the
corresponding columns in the edge table. This can be done in linear time
\cite{AHO}. The algorithm then determines the intersection points of the edge
weight functions. They are then sorted on the time coordinate to form the
edge-order series. Minimum spanning tree at $t=1$ is built using Algorithm
\ref{alg2} and the fcycle-table is populated. Step by step execution of
Algorithm \ref{alg2} and the filling of the fcycle-table is shown in
Figure~\ref{cc}.

After building the MST for time $t=1$ a bit vector is created by setting the
bits corresponding to tree-edges to 1 (MST(I) in Figure \ref{re}). The
edge-table and MST(I) is shown in Figure \ref{re}. After that the remaining
intersection points belonging to the remaining intervals of $\delta$ are grouped
by their time coordinate to build $S_t$ (here, $S_t$ is the set of all
intersection points with time coordinate as $t$). The set of all $S_t$'s form
$\Gamma$. The algorithm then updates the MST, if necessary, by exchanging the
edges at the intersection points. After $t=1$ the next interval in $\delta$
starts $t=1.25$. There are two pairs of edges intersecting at $t=1.25$. Both
these pairs involve only tree edges or non-tree edges, and thus, these are
pruned by the if conditions at line 10 and line 13.\footnote{Either the previous
MST or the current MST may be used here because any edge can be involved in only
one intersection at a time.} Similarly, the intersection of edge (3,5) and (2,3)
is pruned since they belong to different bi-connected components. 
\begin{figure}[t]
\begin{center}
\includegraphics[width=70mm]{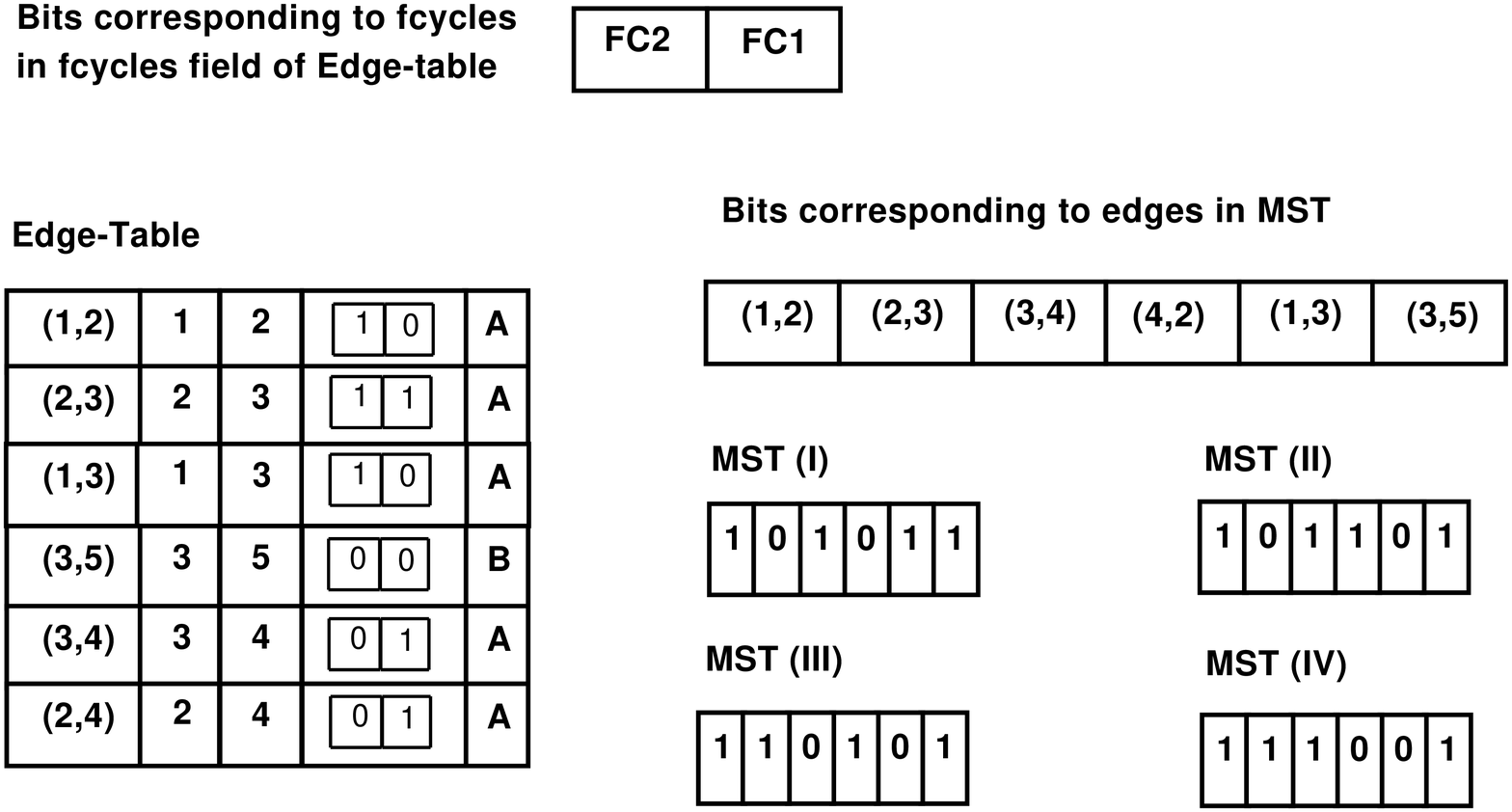}
\caption{Trace of EIO Algorithm for ST network shown in Figure \ref{mdt-a}.}
\label{re}
\end{center}
\vspace{-4mm}
\end{figure}

Now for the intersection between the edges (2,4) and (1,3) at t=1.5, the condition in line
30 is evaluated by performing an bitwise AND operation of their respective
fcycle-id columns. Since the edges (2,4) and (1,3) do not share a cycle (as per
fcycles column of edge-table) the non-tree edge (2,4) is added to  current MST
and heaviest edge in the fundamental cycle is deleted and MST is updated (now it
will be MST(II) in Figure \ref{re}). This step is essential because even if the
edges do not share a cycle as per fcycles column of  edge-table they still lie
on a common cycle \cite{SMST1}. The next intersection point
involving tree and non-tree edge of same component is at t=2.66 between (3,4)
and (2,3). Here, the edges (3,4) and (2,3) share a cycle (if condition at line 30
is true) and thus are exchanged to create MST(III). Similarly the next exchange
is at t=3.66. Since the EIO algorithm outputs the older MST after every change,
the last TMST has to be output separately after the main loop.

\begin{algorithm*}[ht]
\caption{Edge Intersection Order (EIO) algorithm for finding TSMST}
\label{alg3}
\begin{algorithmic}[1]
\STATE Compute the bi-connected components of the given graph and update bi-connected column of edge-table
\STATE Determine the intersection points of the edge weight functions 
\STATE Sort the intersection points with respect to time coordinate 
\STATE Find the MST at $t=1$ (let this be previous MST) using Algorithm \ref{alg2} and populate the fcycle-table
\STATE Set the start time of time sub interval of previous MST to $1$ 
\STATE Construct $S_t$s from the intersection points
\FORALL{$S_t \in \Gamma$}
\STATE Set MST-change flag and new-MST-exist flag to FALSE
\FORALL{intersection points $p_r \in S_t$}
\IF{all edges are tree edges or non-tree edges}
\STATE continue
\ENDIF
\IF{all edges belong to different bi-connected components}
\STATE continue
\ENDIF
\STATE group the edges into bi-connected components (build set $B_{p_r}$)
\FORALL{$b_i \in B_{p_r}$} 
\IF{all edges are tree edges or non-tree edges}
\STATE continue
\ENDIF 
\STATE Determine $s_i$ and $s_j$.
\IF{$s_i$ and $s_j$ are same}
\STATE continue
\ENDIF
\STATE Set MST-change flag to TRUE (all subsequent steps will change MST)
\IF{new-MST-exist is FALSE}
\STATE Create a new MST (i.e., create a new bit vector and assign the same values as of previous MST), let this be current MST 
\STATE Set new-MST-exist flag to TRUE
\ENDIF
\IF{only two edges intersect and they are part of only the common cycle}
\STATE update the current MST (set the bit corresponding to heavier edge to 0 and lighter edge to 1).
\ELSE
\FORALL{non tree edges in $s_j$}
\STATE find their fundamental cycle and delete heaviest edge
\STATE update the current MST
\ENDFOR
\ENDIF
\ENDFOR
\ENDFOR
\IF{MST-change flag is set to TRUE}
\STATE Output the previous MST (TMST) with its time-sub-interval and make current MST as previous MST 
\ENDIF
\ENDFOR
\STATE Output the previous MST (TMST) along with its time-sub-interval
\end{algorithmic}
\end{algorithm*}

\subsection{Relaxation of Edge Presence Assumption}
In this section, we relax the assumption that an edge is always present. The absence of an edge can be represented by 
assigning time intervals during which the edge is absent. Such a case arises when some of the sensor nodes move very far from other sensor nodes.
Consider a case when an edge $e$ is not present during the time interval [2.5 2.9]. Further assume that its weight at time instants $t=1,2,3,$ 
and $4$ is $2, 3, 6,$ and $8$ respectively.   
The edge absence information can be combined with the edge weight time series by expanding the series to define weights at smaller time intervals. For example, 
in the above case, the edge weights would be defined at time instants $t=1.0,1.1,1.2 \ldots 2.4, 2.5-\epsilon$ and then at time
instants $t= 2.9 + \epsilon, 3.0, \ldots 3.9, 4.0$. The edge weight at these new intermediate time instants can be determined by interpolating 
original weight function \footnote{Since the edge function is linear,
interpolation works nicely.}. 

The absence of an edge for a certain time period affects the MST and the bi-connected components of the graph.
If a non-tree edge does not exist after a certain time instant then there will no change in the MST. On the contrary, if a tree edge 
ceases to exist then it is replaced by a non-tree edge of lowest weight such that the MST remains connected and optimal. This is done by
adding each of the non-tree edges (in increasing order of their weights) to the MST and if the tree edge (which is absent) in found in the cycle created, then it is 
replaced by that particular non-tree edge. If multiple tree edges are absent during a particular time interval then each of these tree edges 
is replaced by a non-tree edge. In such cases, each of the non-tree edges (in increasing order of their weights) is added to the MST and if any of the tree edges 
is found in the cycle created by the addition of non-tree edge, then it is replaced by that particular non-tree edge. The tree edges are stored in a hash table for this case. 
Now, when an edge reappears (either tree or non-tree), it is added to the MST and the heaviest edge is deleted from the cycle created. For practical purposes, we assume 
that the graph is never disconnected due to disappearance of edges. Therefore, at most $m-n+1$ edge can be absent during any particular time interval. 

When an edge is absent during a certain time interval, the bi-connected component information of the edges change. This information can be 
recomputed in linear time using the algorithm given in \cite{AHO}. This updated information may be useful for pruning some of the 
intersection points which may not be pruned otherwise. The bi-connected component information has to be recomputed again after the edge appears.
Re-computation of bi-connected components can be delayed until the next intersection point is encountered. This is useful in cases when there are no 
intersections during the period of absence. The bi-connected component information is only used by the EIO algorithm.

\section{Analytical Evaluation}
\label{eval}
The correctness proof and the asymptotic analysis of the edge-intersection order algorithm and the time sub-interval order algorithm are presented in this
chapter. 
\subsection{Analysis of TSO algorithm}
\begin{theorem}
The time-sub-interval order (TSO) algorithm is correct.
\end{theorem}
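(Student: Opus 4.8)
The plan is to establish correctness in two layers: first, that the tree the algorithm assigns to each individual edge-order-interval is exactly the unique optimal spanning tree valid throughout that interval, and second, that the merging loop correctly reconstructs the maximal time-sub-intervals and their associated $TMST(i)$.

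For the first layer I would invoke the structural results already in hand. By construction (Steps~1--3) the set $\delta$ partitions the horizon $[1,K]$ at the intersection points of the edge weight functions, so by Observation~\ref{ob1} each element $\omega \in \delta$ is an edge-order-interval in which all weight functions have a fixed relative order. Proposition~\ref{lem5} then guarantees that $\omega$ has a unique MST, and Observation~\ref{ob2} guarantees that the greedy routine invoked at Step~8 computes precisely this tree. Hence the per-interval computation is sound: the tree attached to $\omega$ is the genuine MST at every time instant in $\omega$.

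The heart of the argument is the second layer, and it rests on the claim that the MST can change only at an intersection point. Since an MST is determined entirely by the relative ordering of the edge weights (Lemma~\ref{lem4}, via the greedy characterization), and since this ordering is by definition constant inside an edge-order-interval, the MST is constant on each $\omega$. Consequently the map from time to MST is piecewise constant with breakpoints confined to the intersection points, so every maximal interval on which it is constant --- that is, every time-sub-interval in the sense of the first definition --- is exactly a union of consecutive edge-order-intervals of $\delta$. I would make this explicit by noting that a time-sub-interval boundary cannot fall strictly inside an $\omega$, because the MST does not change there; hence the boundaries form a subset of the intersection points.

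Given this, I would finish by verifying that the main loop realizes precisely this grouping. Walking through $\delta$ in increasing time order, the algorithm compares the MST of each new edge-order-interval against the previous one, extending the current time-sub-interval when the trees agree (Step~13) and emitting the previous tree with its accumulated interval when they differ (Step~10) before starting a fresh interval; the final emission after the loop (Step~16) handles the last run. This produces exactly the maximal runs of consecutive edge-order-intervals sharing a common MST, which by the previous paragraph are the time-sub-intervals, with each emitted tree being the corresponding $TMST(i)$. The main obstacle is making the piecewise-constant/breakpoint argument rigorous --- in particular arguing that the relative order of edges, and therefore the MST, genuinely cannot change except at an intersection point --- since everything else follows mechanically once that invariant is secured.
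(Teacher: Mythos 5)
Your proposal is correct and follows essentially the same route as the paper's own proof: it rests on Observation~\ref{ob1} (edge orderings are fixed between consecutive intersection points), the unique MST per edge-order-interval (Proposition~\ref{lem5}), and the invariant that the MST can change only where the edge ordering changes, so recomputing per interval in $\delta$ suffices. You fill in details the paper leaves implicit --- notably the explicit verification that the merging loop reconstructs the maximal time-sub-intervals --- but this is elaboration of the same argument, not a different one.
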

\begin{proof}
The correctness of the TSO algorithm can be established by using Observation \ref{ob1}. The minimum spanning tree changes
 whenever there is a change in ordering of the edges. The TSO algorithm addresses this by recomputing the minimum spanning
 tree for each interval in $\delta$. This means that it recomputes the ordering of edges again after every intersection 
point. Moreover, from Observation \ref{ob1}, we can say that the ordering or edges do not change inside an edge order 
interval. This proves that the TSO algorithm is correct.
\end{proof}

\noindent\textbf{Asymptotic analysis of TSO algorithm} Since the edge weight is assumed to vary
linearly between any two time instants in a time series, there can be at most
$O(m^2)$ (where $m$ is the number of edges) intersections among the edge weight
functions in the worst case. Now if this happens between all time instances in
the entire time horizon $[1, \ldots K]$, the total number of intersections
will be $O(m^2K)$. The time needed to sort all the intersection points is
$O(m^2K\log(m^2K))$. For each intersection point TSO recomputes the MST
which takes $O(m\log m)$ per intersection point. Thus, the total time complexity
of the TSO algorithm is $O(m^3K\log m + m^2K\log(m^2K))$.\\ 

\noindent\textbf{Relaxation of edge presence assumption}
Now, consider the case when the edge presence assumption is relaxed. Even though the
number of time instants in the edge weight time series increases,
the total number of intersection points do not change because we are not adding any new
weight functions. Let $L$ be the number of disjoint time intervals during which any of the edges is absent.
During each of these $L$ intervals, if any of the tree edges is absent then it has to replaced by a non-tree edge.
The replacement involves observing the cycles created by each of the $O(m-n+1)$ 
non-tree edges in increasing order of their weights. Since there can be at most $n-1$ tree edges, this takes $O(mn + m\log m)$ time.
Here, each tree edge is not processed individually. All the tree edges are stored in a hash table and are checked in the cycles created by the addition of each non-tree edge.
Then after an edge (either tree or non-tree) reappears, it is again added to MST and the heaviest edge is deleted from the cycle created. Now, there can be at most $m-n+1$ edges absent 
during a time interval (assuming that the graph always remains connected). Therefore, this step takes $O(mn)$ time.
Therefore the total complexity is $O(Lmn + Lm\log m)$. Thus, the overall complexity of 
the TSO algorithm becomes $O(m^3K\log m + m^2K\log(m^2K) + Lmn + Lm\log m)$.


\subsection{Analysis of Modified Reverse Delete Algorithm}

\begin{theorem}
The modified reverse delete algorithm produces a MST.
\end{theorem}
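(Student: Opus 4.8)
The plan is to show that the modified reverse-delete algorithm maintains the invariant that, after every deletion, the surviving edge set is connected and still contains the (unique) minimum spanning tree $T^{*}$ of the input graph. Since the edge weights are assumed distinct, $T^{*}$ is well defined by Lemma~\ref{lem4}, and the algorithm halts precisely when only $n-1$ edges remain; because a connected subgraph on $n$ vertices with $n-1$ edges is a spanning tree, the surviving tree must then equal $T^{*}$. So the whole argument reduces to verifying that each individual deletion is safe.

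The technical engine is the classical cycle property, which I would state and prove first: in a graph with distinct edge weights, the unique heaviest edge on any cycle $C$ belongs to no minimum spanning tree. I would argue by exchange. Suppose the heaviest edge $e$ of $C$ lies in $T^{*}$. Deleting $e$ splits $T^{*}$ into two components, inducing a cut; because $C$ is a cycle through $e$, it must contain at least one other edge $e'$ crossing this same cut, and by the choice of $e$ we have $w(e') < w(e)$. Then $T^{*} - e + e'$ is again a spanning tree but of strictly smaller total weight, contradicting minimality. Hence $e \notin T^{*}$.

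With the cycle property in hand, the correctness argument is an induction on the number of deletions. Each iteration picks a non-tree edge $ne$ of the current DFS tree and forms the fcycle consisting of $ne$ together with the tree path between its endpoints; this is a genuine cycle of the current graph, so the cycle property applies to it. The edge the algorithm deletes is the heaviest edge of this fcycle, which therefore lies in no minimum spanning tree and in particular is not in $T^{*}$, so the invariant $T^{*}\subseteq$ (surviving edges) is preserved. Connectivity is also preserved, since the deleted edge lies on a cycle and removing a cycle edge never disconnects a graph. The loop continues while a non-tree edge exists, i.e.\ while the graph has more than $n-1$ edges; when it stops, the $n-1$ surviving edges form a connected spanning subgraph, hence a spanning tree containing the $(n-1)$-edge tree $T^{*}$, forcing equality.

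The main obstacle is the bookkeeping that ties the algorithm's fcycle to an honest cycle of the current graph and that guarantees the phrase ``heaviest edge'' is unambiguous: I must invoke the distinctness of weights (so every cycle has a unique maximum) and verify that after each DFS recomputation the chosen non-tree edge genuinely closes a cycle, so that each deletion is an instance of the cycle property rather than an arbitrary removal. Once that correspondence is secured, the invariant together with the edge-count and connectivity termination condition deliver the result.
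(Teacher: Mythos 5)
Your proof is correct, but it follows a genuinely different route from the paper's. The paper argues by contradiction on the \emph{output}: it sorts the edges of the algorithm's tree $T_1$ and of an MST $T_2$ in increasing weight order, takes the first position $i+1$ where the two sorted sequences diverge (with $e_{i+1} > e_{i+1}'$), adds $e_{i+1}'$ to $T_1$, and shows by a case analysis on the resulting unique cycle that the algorithm, which always deletes the heaviest edge of an fcycle, would have deleted $e_{i+1}$ rather than $e_{i+1}'$ --- mirroring the exchange argument of its own Lemma~\ref{lem4}. You instead argue on the \emph{execution}: you maintain the invariant that the surviving edge set is connected and contains the unique MST $T^{*}$, justify each deletion locally via the cycle property (heaviest edge of any cycle lies in no MST, which you prove by a cut-and-exchange argument), observe that deleting a cycle edge preserves connectivity, and conclude at termination since a connected subgraph with $n-1$ edges containing the $(n-1)$-edge tree $T^{*}$ must equal it. Your approach buys modularity and transparency: the cycle property is the same tool the paper itself invokes later (citing \cite{CPROP}) in the EIO correctness proof, each deletion is certified independently of the order of iterations, and you make explicit the points the paper leaves implicit --- that the fcycle is a genuine cycle of the current graph, that distinct weights make ``heaviest'' unambiguous, and that connectivity and the edge count force termination in a spanning tree. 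The paper's approach avoids introducing the invariant machinery and works directly with the final tree, though at the cost of a somewhat more intricate case analysis (and a notational slip, indexing the tree edges up to $m-n+1$ rather than $n-1$). One small caveat in your write-up: the uniqueness of $T^{*}$ and of the heaviest cycle edge rests on the weights at $t=1$ being pairwise distinct, which holds here by the paper's standing $\epsilon$-perturbation assumption; it is worth saying so explicitly, as you partly do.
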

\begin{proof}
(By contradiction) Let $T_1$ be the spanning tree generated by the algorithm and let $T_2$ be the MST. Let $OT_1$ = $e_1 e_2 \ldots e_i e_{i+1} \ldots e_{m-n+1}$ be the 
increasing order of edge weights of $T_1$. Let $OT_2$ = $e_1' e_2' \ldots e_i' e_{i+1}' \ldots e_{m-n+1}'$ be the increasing order of edge weights 
of $T_2$. Without loss of generality, assume that $e_j$ is same as $e_j'$, $\forall j \leq i$. Further assume that $e_{i+1} > e_{i+1}'$. Now consider 
the cycle generated by adding the edge $e_{i+1}'$ to $T_1$. Only one such cycle can be created as $T_1$ is a spanning tree of the graph. Now if this cycle contains only the edges $e_j$ where $j \leq i$, then it 
implies that there is a cycle in $T_2$ as $e_j$ is same as $e_j'$, $\forall j \leq i$ (a contradiction).  Now, if the cycle contains some edges $e_j$ where $j \leq i$ and $e_{i+1}$, then 
the algorithm would have chosen edge $e_{i+1}$ to delete instead of $e_{i+1}'$ because the algorithm deletes the heaviest edge of the cycle. Consider the case when the cycle contains the 
edges $e_j$ where $j \geq i+1$. Let $e_j$, where $j \neq i+1$, be any edge of the cycle. Then, $e_{i+1}' < e_{i+1} < e_j$. Again, the algorithm would have chosen edge $e_{i+1}$ instead of $e_{i+1}'$. 
Therefore, we can conclude that $T_1$ and $T_2$ are same.
\end{proof}

\noindent\textbf{Asymptotic analysis of Modified Reverse Delete Algorithm}
The algorithm determines the DFS tree of the graph in each iteration. This takes $O(m+n)$ where $m$ is the number of edges and $n$ is the number of nodes. Finding the heaviest edge in a 
fcycle takes $O(n)$ time. This happens when the fcycle contains all the nodes of the graph. After each iteration the number of edges decreases by one. Therefore, the total number of iterations 
required are $m-n+1$ (one for each non-tree edge deleted). Therefore, the total time taken is given by the sum of the series $(m+n) + (m-1+n) + \ldots + (2n-1)$. This series has $m-n+1$ terms. 
Thus, the overall complexity of the algorithm is $O(m^2)$.

\subsection{Analysis of EIO algorithm}

\begin{theorem}
The edge intersection order (EIO) algorithm is correct.
\end{theorem}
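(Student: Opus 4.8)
The plan is to reduce correctness of the EIO algorithm to two facts: (i) the minimum spanning tree is piecewise constant and can change only at intersection points, and (ii) at every intersection point the algorithm either correctly leaves the tree unchanged or correctly recomputes it. First I would establish (i) using Proposition~\ref{lem5} together with Observation~\ref{ob1}: consecutive intersection points delimit exactly the edge-order-intervals, and each such interval has a unique, stationary MST (Observation~\ref{ob3}). Hence the true MST is constant on each edge-order-interval and can differ only between adjacent intervals, i.e.\ across an intersection point. This means it suffices to compute the MST correctly at $t=1$ and to update it correctly at each intersection point, processed in increasing time order.

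For the base case I would invoke the theorem that the modified reverse-delete algorithm (Algorithm~\ref{alg2}) produces a correct MST, so the tree stored at $t=1$ equals the true MST of the first edge-order-interval. I would then proceed by induction over the sorted intersection-point times. The induction hypothesis is that the ``previous MST'' held by the algorithm equals the true MST of the edge-order-interval immediately preceding the current time $t$. The inductive step splits into the pruning branches and the update branch. For the pruning branches I would argue that each skipped (\emph{continue}) case is licensed by a proposition already proved: all edges tree or all edges non-tree is covered by Propositions~\ref{lem1} and~\ref{lem2}; edges in distinct bi-connected components by Proposition~\ref{cor1}; the per-component re-check by Proposition~\ref{cor2}; and the ``same relative order'' test handles change points at which functions touch without reordering, where the edge ordering, and hence the MST by Lemma~\ref{lem4}, is literally unchanged. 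In each case the true MST after $t$ equals that before $t$, so the unchanged previous MST remains correct.

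For the update branch I would use the fact (Proposition~\ref{cor1}) that the MST of the whole graph is the disjoint union of the MSTs of its bi-connected components, so it is enough to verify that each component's MST is updated correctly; intersections in other components do not interfere. Within a component I would invoke the cycle property of minimum spanning trees: in the two-edge single-common-cycle case, swapping the now-heavier tree edge for the lighter non-tree edge is exactly the exchange prescribed by that property, and in the general case adding each involved non-tree edge and deleting the heaviest edge of its fundamental cycle reproduces the MST for the new ordering. I would justify this by comparing the output with a fresh greedy construction (Observation~\ref{ob2}) on the post-$t$ edge order, showing that only the reordered edges can enter or leave the tree.

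The hard part will be the general update branch when several intersecting pairs share the same time coordinate $t$ inside one bi-connected component: the fundamental cycles are taken with respect to the previous tree, and the simultaneous exchanges could in principle interact. I would handle this by showing that the set of edges whose relative order changes at $t$ is confined to the component, that the precomputed fcycle information still identifies the correct fundamental cycle for each added non-tree edge, and that performing the heaviest-edge deletions yields the same forest as running Kruskal's algorithm on the post-$t$ ordering, so that the local exchanges ``commute'' to the global MST. Establishing this equivalence rigorously, rather than the pruning steps (which follow directly from the cited propositions), is where the real work lies.
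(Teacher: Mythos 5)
Your proposal takes essentially the same route as the paper's own proof: the pruning branches are justified by Propositions~\ref{lem1}, \ref{lem2}, \ref{cor1}, \ref{cor2} and the same-relative-order test by Lemma~\ref{lem4}, while the update branch rests on the cycle property of minimum spanning trees, exactly as in the paper. The differences are ones of rigor rather than approach---your explicit induction over sorted intersection points (with the modified reverse-delete correctness as base case, and piecewise constancy of the MST from Observation~\ref{ob1} and Proposition~\ref{lem5}) and your flagged ``hard part'' about interacting simultaneous exchanges within one bi-connected component are both left implicit in the paper, whose proof simply asserts the cycle-property argument for the update step without treating that case separately.
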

\begin{proof}
The EIO algorithm prunes an intersection point if only tree edges or non-tree edges are involved. 
Proposition \ref{lem1} and Proposition \ref{lem2} show the correctness of this filtering step.
Similarly, an intersection point is pruned if all the edges involved in it belong to different components. The correctness
of this filtering step is evident from Proposition \ref{cor1}. An intersection point is also pruned if two or more tree
edges (or non-tree edges) of one bi-connected component intersect with two or more non-tree
edges (or tree edges) of another bi-connected component. The correctness of this filter step is shown in Proposition \ref{cor2}.
Furthermore, an intersection is pruned if the relative order of edge weights do not change after the intersection. This is because 
in such cases, as there is no change in the relative order of edges and all the edges have distinct weights (before and after the intersection), there 
will be no change in the MST (using Lemma \ref{lem4}). 

After the filtering steps the algorithm checks if the edges involved can be directly exchanged or not. Otherwise the non edge is added to the MST. 
This addition can create only one cycle. The cycle property of minimum spanning trees \cite{CPROP} states that given a cycle, the heaviest edge in 
that cycle does not belong to any minimum spanning tree. Hence, using this we can add the non tree edges and delete the heaviest edge without creating 
any cycles or affecting the correctness of the minimum spanning tree. Thus, the EIO algorithm is correct.
\end{proof}

\noindent\textbf{Storage costs of the data structures} The edge-table has an entry for each
edge in the ST network. Thus it would have $m$ entries, where $m$ is number of
edges in the ST network. Now the number of fcycles in graph is bounded by
$O(m-n+1)$ (one for each edge deleted during construction of MST at t=1) where
$m$ and $n$ are number of the edges and nodes of ST network. Thus, the length of
the bit vector for fcycles column of edge-table is $m-n+1$ (one bit for each
fcycle). Therefore the total storage cost of edge-table is $O(m^2)$. Similarly,
the total number of entries in fcycle-table is $m-n+1$ and each entry of
fcycle-table has a list which can have a worst case length of $O(m)$. Therefore,
the total storage cost of fcycle-table is $O(m^2)$.\\

\noindent\textbf{Asymptotic analysis of EIO algorithm} The running time of the EIO algorithm is
sensitive to the number of intersection points and number of edges involved per
intersection point. Here, we consider two kinds of intersection points: one, in
which all the edges are involved and the other, where only two edges (or a
constant number) are involved. First consider the case of a two-edge
intersection. The number of two edge intersections (or constant number) between
a pair of consecutive time instants is $O(m^2)$. Since the
edge-table is indexed using hashing, all the filtering steps would take
only $O(1)$ time. Similarly, step 21 (determining $s_i$ and $s_j$) would  take
only constant time as there are only two (or constant number) edges. Steps 33-36
can take $O(n)$ ($n$ being the number of nodes in graph) time in the worst case
(when the fundamental cycle involves all the nodes of the graph). Thus, the two
edge intersection case would take $O(m^2n)$ worst case time (for one consecutive
pair of time instances in the time series). The maximum number of times this can
happen is $O(K)$ (once between every two time instances of the time series).

Now, consider the case when $O(m)$ edges intersect at a single point. In this
case step 21 would take $O(m\log m)$ time. This kind of intersection would
involve a maximum of $O(m-n+1)$ non-tree edges. Thus step 33 would take $O(n)$
per non-tree edge making a total of $O(mn + m\log m)$ time in the worst case.
This kind of intersection can happen only $O(K)$ times. This is because the edge
weight functions vary linearly between two time instances of the time series,
and thus they can all meet at only one point between two time instants of the
time series. Consider the case when length of the time series is very long. Let
the length of time interval when $O(m^2)$ intersections (two edge intersections)
occur be $K_1$ and $O(m)$ edge intersections be $K_2$ where  $K_1 + K_2 = K$.
The time required to sort the intersection points in two-edge intersection case
is bounded by $O(m^2K_1\log(m^2K_1))$, whereas it would take $O(K_2\log K_2)$
time to sort when $O(m)$ edges are involved in the intersection. Therefore the
total time required for sorting all the intersection points is
$O((m^2K_1 + K_2)\log(m^2K_1+ K_2))$. Thus the total worst case time required
by the EIO algorithm is the sum of time spent on two-edge intersections, $O(m)$
edge intersections and, the time required to sort all the intersection points.
Thus the overall time complexity of the EIO algorithm is $O(m^2nK_1 + mnK_2 +
K_2m\log m + (m^2K_1 + K_2)\log(m^2K_1+ K_2))$.\\

\begin{figure}[h]
\begin{center}
\subfigure[Comparison on time series length]{\label{exp-a}\includegraphics[width = 70mm]{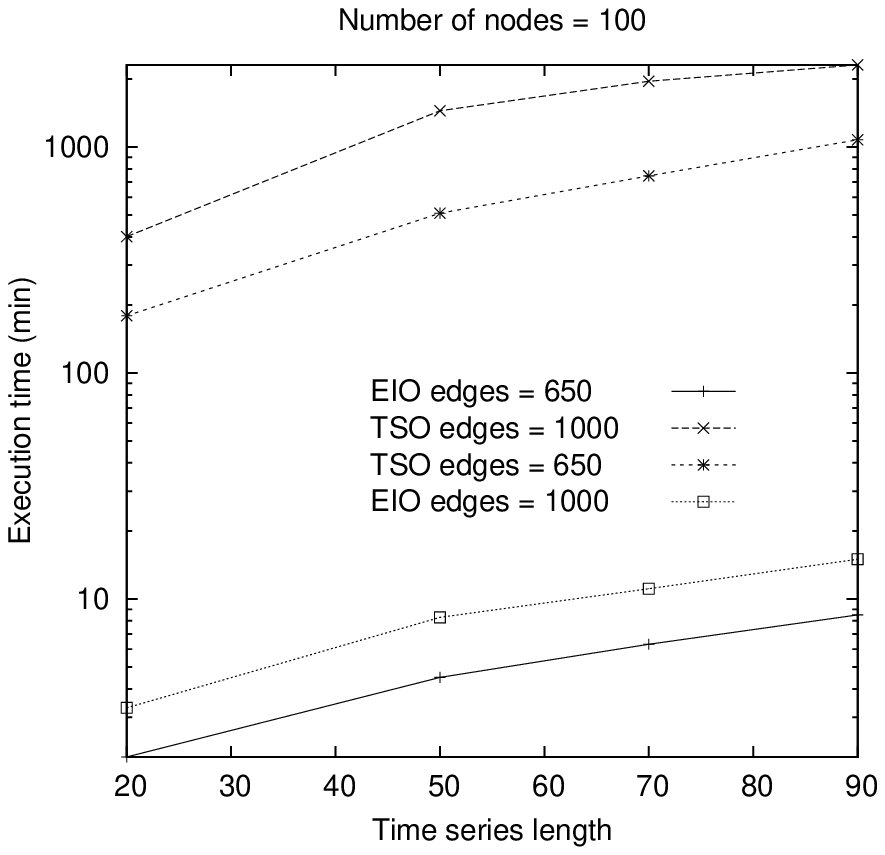}}
\subfigure[Comparison on number of edges]{\label{exp-b}\includegraphics[width = 70mm]{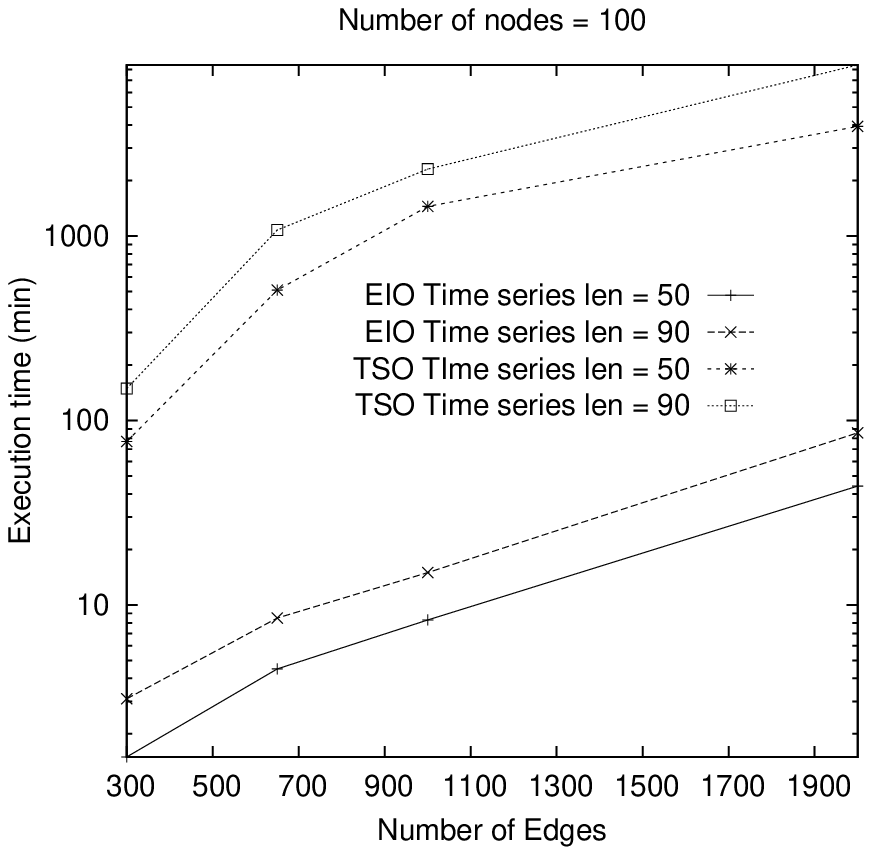}}
\caption{Comparison of EIO and TSO algorithms.}
\end{center}
\end{figure} 

\noindent\textbf{Relaxation of edge presence assumption}
Now, consider the case when the edge presence assumption is relaxed. Even though the
number of time instants in the edge weight time series increases,
the total number of intersection points do not change because we are not adding any new
weight functions. Let $L$ be the number of disjoint time intervals during which any of the edges is absent.
The bi-connected component information has to re-computed at the start
of each interval and at the end of each interval. Thus, the bi-connected component 
information has to be computed $2L$ times and each of these computations 
take $O(m+n)$ time \cite{AHO}. During each of these $L$ intervals, if any 
of the tree edges is absent then it has to replaced by a non-tree edge.
The replacement involves observing the cycles created by each of the $O(m-n+1)$ 
non-tree edges in increasing order of their weights. Since there can be at most $n-1$ tree edges, this takes $O(mn + m\log m)$ time.
Here, each tree edge is not processed individually. All the tree edges are stored in a hash table and are checked in the cycles created by the addition of each non-tree edge.
Then after an edge (either tree or non-tree) reappears, it is again added to MST and the heaviest edge is deleted from the cycle created. Now, there can be at most $m-n+1$ edges absent 
during a time interval (assuming that the graph always remains connected). Therefore, this step takes $O(mn)$ time.
Therefore the total complexity is $O(Lmn + Lm\log m)$. Thus, the overall complexity of 
the EIO algorithm becomes  $O(m^2nK_1 + mnK_2 + K_2m\log m + (m^2K_1 + K_2)\log(m^2K_1+ K_2) 
+ Lmn + Lm\log m)$. 


\section{Experimental Evaluation}
\label{exp}

\begin{figure}[h]
\begin{center}
\includegraphics[width=75mm]{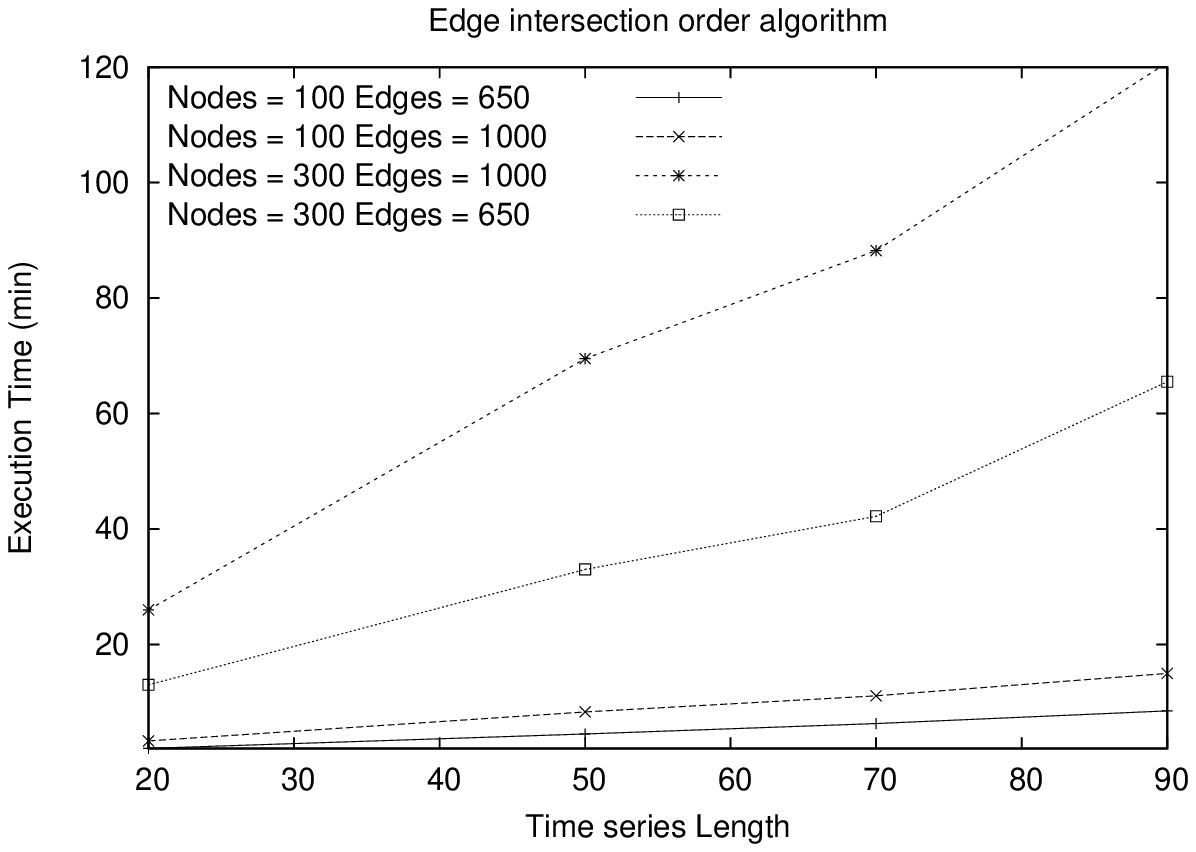}
\caption{EIO algorithm: Execution time with respect to length of time series.}
\label{plot3}
\end{center}
\end{figure}

The purpose of the experimental evaluation was to compare the execution times of
the TSO and EIO algorithms. The two algorithms were compared on synthetic
datasets. The experimental parameters that were varied in experiments are: (1)
length of time series, (2) number of edges, and (3) number of nodes. In our
experiments we generated the networks randomly. Given the number of nodes ($n$)
and number of edges ($m$) of the network, first, a spanning tree containing
$n-1$ edges is generated for the network. This is to guarantee that the network
is connected; otherwise, the TSMST cannot be determined. Edges are then randomly
added to this spanning tree till the number of edges becomes $m$. After that a
time series is associated with each edge. The time series is also generated
randomly. The experiments were conducted on an Intel Xeon workstation with
2.40GHz CPU, 8GB RAM and Linux operating system.

\subsection{Effect of Length of Time Series}

Figure~\ref{exp-a} shows the performance of EIO and TSO algorithm as the length
of the time series increases. Execution time of both the algorithms increase
with time. The figure shows a superior performance of the EIO algorithm over the
TSO algorithm. This is due to the increase of intersection points that occurs
with the increase in the length of time series. Since the TSO algorithm
recomputes the MST at each intersection point, it takes much more time
than the EIO algorithm, which just updates the MST with no recomputing.
Experiments reveal that execution time of both TSO and EIO algorithms vary
linearly with length of time time series.

\subsection{Effect of Number of Edges}

Figure~\ref{exp-b} shows the performance of EIO and TSO algorithm as the number
of edges increases. Execution time of EIO algorithm was observed to increase
quadratically. Execution time of the TSO algorithm increased much more rapidly
than that of the EIO algorithm. Both the previous experiments clearly showed the
superior performance of the EIO algorithm over the TSO algorithm. The EIO
algorithm was faster than the TSO algorithm by an order of magnitude. Moreover, the
difference in the execution times of the two algorithms increased with increase
in length of time series and number of edges.

\begin{figure}[h]
\begin{center}
\subfigure{\label{bc-a}\includegraphics[angle=270,width = 74mm]{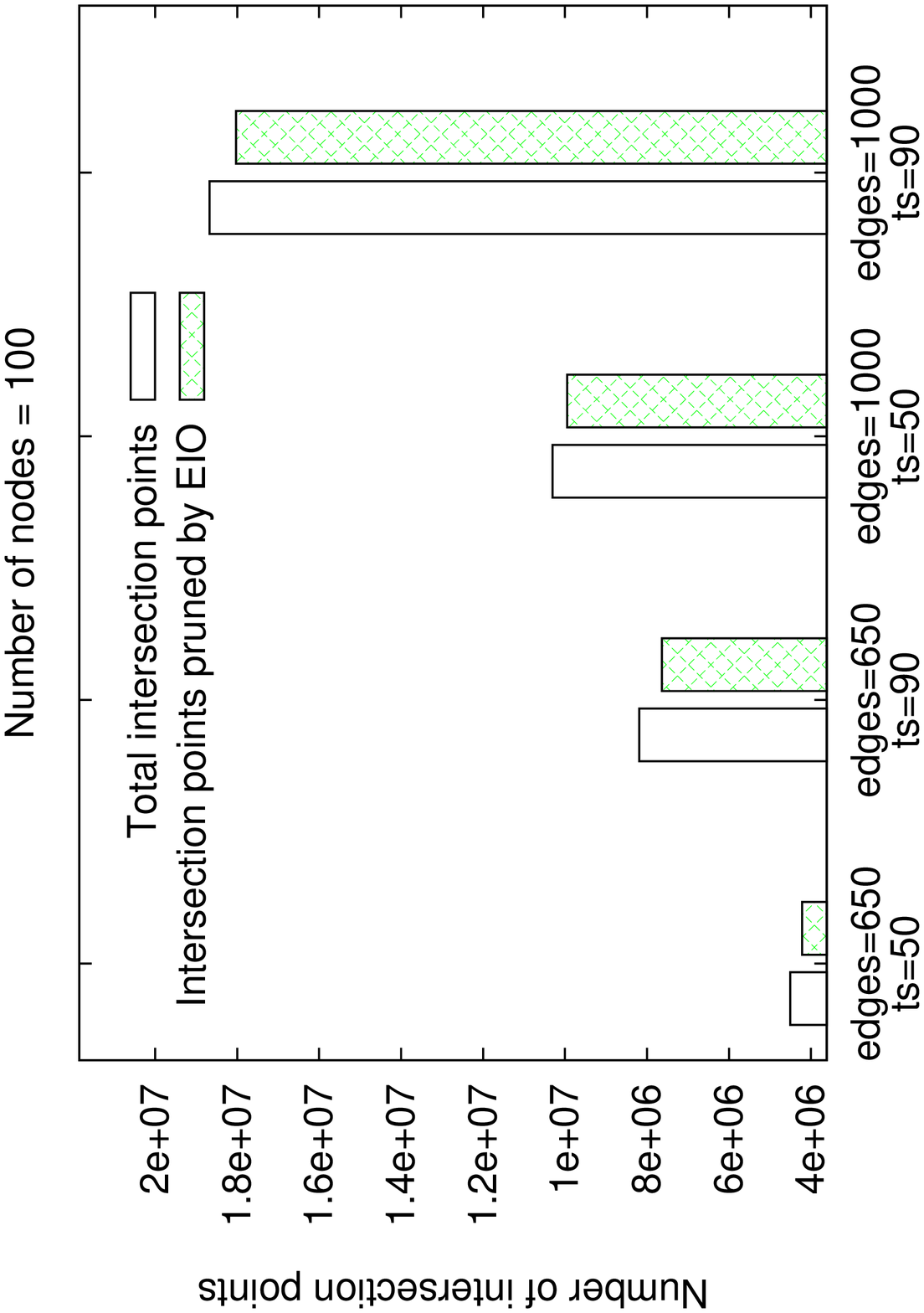}}
\subfigure{\label{bc-b}\includegraphics[angle=270,width = 74mm]{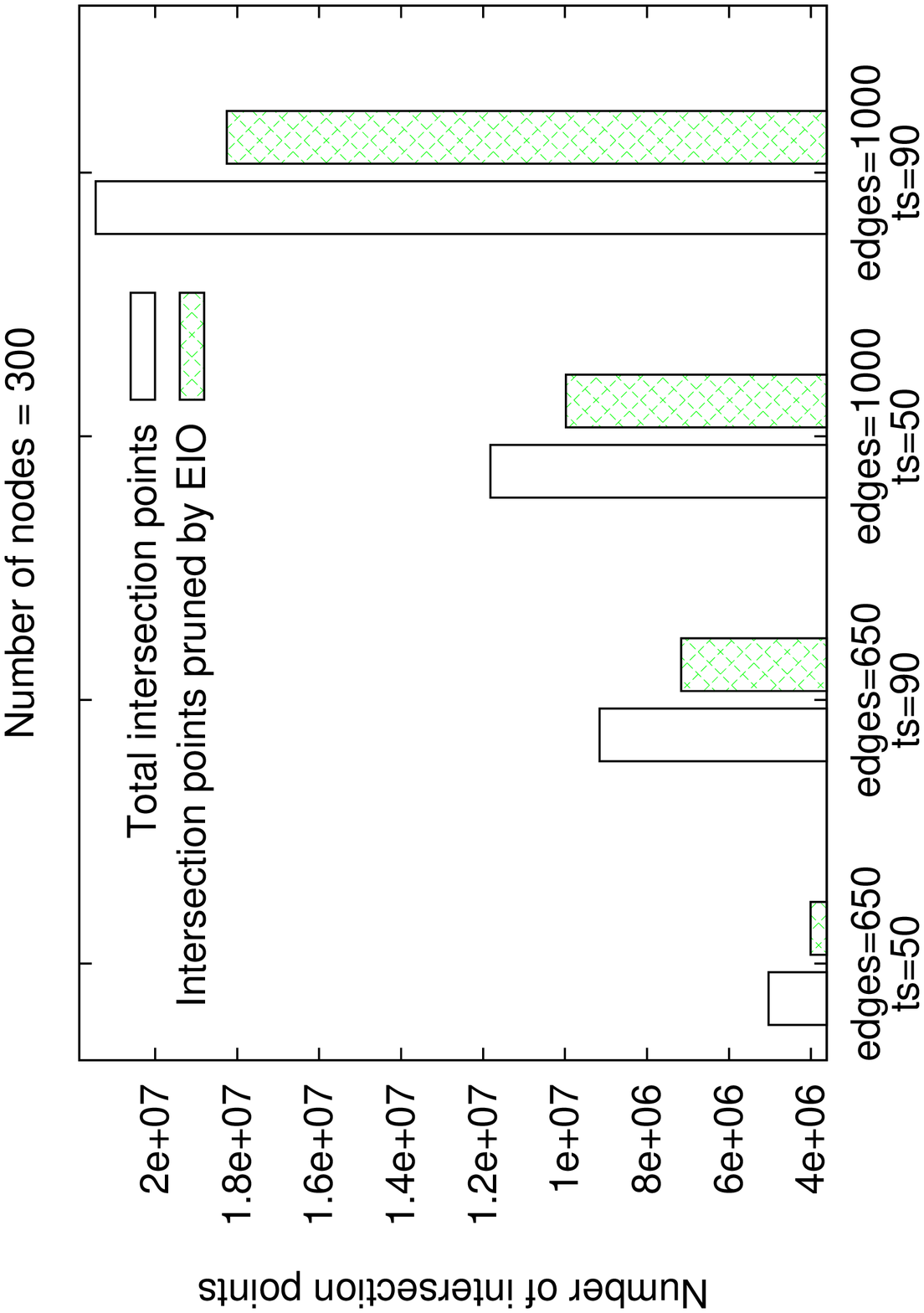}}
\caption{Number of intersection points in different datasets.}
\label{bc}
\end{center}
\end{figure}

Figure~\ref{plot3} shows the performance of the EIO algorithm for four different
network sizes. The execution time increased linearly with length of time series
in all cases. The execution time increased at a faster rate as the size of the
network increased. For instance, execution time increased much more rapidly for
a network with 300 nodes and 1000 edges than for a network with 100 nodes and
650 edges.

\subsection{Performance Evaluation of the Filters}

Figure~\ref{bc} shows the total number of the intersection points and the number
of intersection points pruned by the EIO algorithm. The figure shows that the
number of intersection points increase with increase in size of network. The
figure also shows that a large number of intersection points were pruned by the
EIO algorithm. This clearly shows the superior performance of the filters used
in the EIO algorithm. Table \ref{pf5} and Table \ref{pf9} show the percentage of 
intersection points filtered by individual filters.
     
\begin{table*}[t]
\begin{center}
\begin{tabular}{ m{2.1cm}|m{1.6cm}|m{1.6cm}|m{2.19cm}|m{2cm}|m{2.05cm}}
\cline{2-5} 
& \multicolumn{4}{c|}{Percentage of intersection points filtered} &\\
\hline
\multicolumn{1}{|m{2.1cm}|}{Network size} & Only tree edges & Only non-tree edges & Different bi-connected components & No change in relative order & \multicolumn{1}{m{2.1cm}|}{Total intersection points}\\
\hline
\multicolumn{1}{|m{2.1cm}|}{Nodes=100 Edges=130} & 73.42 & 10.69 & 0.87 & 0.07 & \multicolumn{1}{m{2.1cm}|}{356885} \\
\hline 
\multicolumn{1}{|m{2.1cm}|}{Nodes=100 Edges=150} & 58.95 & 20.06 & 0.37 & 0.05 & \multicolumn{1}{m{2.1cm}|}{481301} \\
\hline 
\multicolumn{1}{|m{2.1cm}|}{Nodes=100 Edges=650} & 5.56 & 87.74 & 0 & 0 & \multicolumn{1}{m{2.1cm}|}{8185234} \\
\hline 
\multicolumn{1}{|m{2.1cm}|}{Nodes=300 Edges=330} & 92.51 & 2.05 & 0.25 & 0.01 & \multicolumn{1}{m{2.1cm}|}{2388274} \\
\hline 
\multicolumn{1}{|m{2.1cm}|}{Nodes=300 Edges=350} & 85.77 & 4.5 & 0.57 & 0.01& \multicolumn{1}{m{2.1cm}|}{2676646} \\
\hline 
\multicolumn{1}{|m{2.1cm}|}{Nodes=300 Edges=650} & 33.7 & 44.5 & 0 & 0.01 & \multicolumn{1}{m{2.1cm}|}{9157842} \\
\hline
\end{tabular}
\caption{Performance of filters for time series length=90.}
\label{pf9} 
\end{center}
\end{table*}

\begin{table*}[t]
\begin{center}
\begin{tabular}{ m{2.1cm}|m{1.6cm}|m{1.6cm}|m{2.19cm}|m{2cm}|m{2.05cm}}
\cline{2-5} 
& \multicolumn{4}{c|}{Percentage of intersection points filtered} &\\
\hline
\multicolumn{1}{|m{2.1cm}|}{Network size} & Only tree edges & Only non-tree edges & Different bi-connected components & No change in relative order & \multicolumn{1}{m{2.1cm}|}{Total intersection points}\\
\hline
\multicolumn{1}{|m{2.1cm}|}{Nodes=100 Edges=130} & 72.88 & 10.33 & 0.9 & 0.05 & \multicolumn{1}{m{2.1cm}|}{196732} \\
\hline 
\multicolumn{1}{|m{2.1cm}|}{Nodes=100 Edges=150} & 58.79 & 19.8 & 0.49 & 0.06 & \multicolumn{1}{m{2.1cm}|}{264020} \\
\hline 
\multicolumn{1}{|m{2.1cm}|}{Nodes=100 Edges=650} & 5.72 & 87.75 & 0 & 0 & \multicolumn{1}{m{2.1cm}|}{4506094} \\
\hline 
\multicolumn{1}{|m{2.1cm}|}{Nodes=300 Edges=330} & 92.14 & 1.91 & 0.26 & 0.01 & \multicolumn{1}{m{2.1cm}|}{1317133} \\
\hline 
\multicolumn{1}{|m{2.1cm}|}{Nodes=300 Edges=350} & 85.77 & 4.27 & 0.12 & 0.02 & \multicolumn{1}{m{2.1cm}|}{1476097} \\
\hline 
\multicolumn{1}{|m{2.1cm}|}{Nodes=300 Edges=650} & 34.38 & 45.19 & 0 & 0.01 & \multicolumn{1}{m{2.1cm}|}{5031355} \\
\hline
\end{tabular}
\caption{Performance of filters for time series length=50.}
\label{pf5} 
\end{center}
\end{table*}

\section{Conclusions}
\label{conclude}
The time-sub-interval minimum spanning tree (TSMST) problem is a key component
of various spatio-temporal applications such as wireless sensor networks. 
The paper proposes two novel algorithms for TSMST
computation. The time sub-interval algorithm (TSO) computes the TSMST by
recomputing the MST at all time points where there is a possible change in the
ranking of candidate spanning trees (i.e., it recomputes the MST at all the
intersection points of edge weight functions) and then outputs the set of
distinct MSTs along with their respective time-sub-intervals. The edge
intersection order algorithm (EIO) updates the MST, only if necessary, at these
time points. Both these algorithms are based on a model for spatio-temporal
networks called time-aggregated graphs. The asymptotic complexity of the TSO algorithm was  $O(m^3K\log m + m^2K\log(m^2K))$ 
and the asymptotic complexity of the EIO algorithm was $O(m^2nK_1 + mnK_2 +
K_2m\log m + (m^2K_1 + K_2)\log(m^2K_1+ K_2))$. Computational complexity analysis shows
that the EIO algorithm is faster than the TSO by a factor of almost $O(m)$. Experiments
also show that the EIO is faster than the TSO algorithm by an order of magnitude.

In future, we plan to evaluate the performance of the algorithms using real datasets. We also
plan to extend the algorithms to give optimal solutions subject to the
constraint that the edge weight functions are non-linear in nature.

\bibliographystyle{abbrv}
\bibliography{ref}
\addcontentsline{toc}{chapter}{Bibliography}

\end{document}